\newtheorem{theorem}{Theorem}
\newtheorem{lemma}[theorem]{Lemma}
\newtheorem{corollary}[theorem]{Corollary}
\newtheorem{proposition}[theorem]{Proposition}
\newtheorem{definition}[theorem]{Definition}
\newtheorem*{theorem-nn}{Theorem}
\theoremstyle{definition}
\newtheorem{example}[theorem]{Example}
\newtheorem{remark}[theorem]{Remark}
\renewcommand{\epsilon}{\varepsilon}
\newcommand{\N}{\mathbb{N}}
\newcommand{\R}{\mathbb{R}}
\newcommand{\M}{\mathcal{M}}
\newcommand{\bi}{\begin{itemize}}
\newcommand{\ei}{\end{itemize}}
\newcommand{\rank}{\textrm{rank}}
\newcommand{\psdrank}{\textrm{psd-rank}}
\newcommand{\purirank}{\textrm{puri-rank}}
\newcommand{\osr}{\textrm{osr}}
\newcommand{\hosr}{\textrm{hosr}}
\newcommand{\seprank}{\textrm{sep-rank}}
\newcommand{\tr}{\textrm{tr}}
\newcommand{\tsr}{\textrm{tsr}}
\newcommand{\be}{\begin{eqnarray}}
\newcommand{\ee}{\end{eqnarray}}
\newcommand{\ben}{\begin{enumerate}}
\newcommand{\een}{\end{enumerate}}
\newcommand{\im}{\item}
\newcommand{\ba}{\begin{array}}
\newcommand{\ea}{\end{array}}
\newcommand{\ra}{\rangle}
\newcommand{\la}{\langle}
\newcommand{\mc}{\mathcal}
\newcommand{\Her}{\mathrm{Her}}
\newcommand{\PSD}{\mathrm{PSD}}
\newcommand{\nn}{\nonumber}
\def\p@subsection{}
\def\p@subsubsection{}
\newcommand\xqed[1]{%
  \leavevmode\unskip\penalty9999 \hbox{}\nobreak\hfill
  \quad\hbox{#1}}
\newcommand\demo{\xqed{$\diamond$}}
\begin{document}

\title{Separability for mixed states with operator Schmidt rank two}

\author{Gemma De las Cuevas}
\address{Institute for Theoretical Physics, Technikerstr.\ 21a,  A-6020 Innsbruck, Austria}
\author{Tom Drescher}
\address{Department of Mathematics, Technikerstr.\ 13,  A-6020 Innsbruck, Austria}
\author{Tim Netzer}
\address{Department of Mathematics, Technikerstr.\ 13,  A-6020 Innsbruck, Austria}

\begin{abstract}
The operator Schmidt rank is the minimum number of terms required to express a state as a sum of elementary tensor factors. 
Here we provide a new proof of the fact that any bipartite mixed state with operator Schmidt rank two is separable, and can be written as a sum of two positive semidefinite matrices per site.  
Our proof uses results from the theory of free spectrahedra and operator systems, 
and illustrates  the use of a connection between decompositions of quantum states and decompositions of nonnegative matrices.
In the multipartite case, 
we prove that any Hermitian Matrix Product Density Operator (MPDO) of bond dimension two is separable, 
and  can be written as a sum of at most four positive semidefinite matrices per site. 
This implies that these states can only contain classical correlations, and very few of them, as measured by the entanglement of purification.
In contrast, MPDOs of bond dimension three can contain an unbounded amount of classical correlations. 
\end{abstract}

\maketitle

\section{Introduction}

Entanglement is an essential ingredient in many applications in quantum information processing and quantum computation \cite{Ni00,Ho09b}. 
Mixed states which are not entangled are called \emph{separable}, 
and they may contain only classical correlations---in  contrast to entangled states, which contain quantum correlations. 
As many other problems in theoretical physics and elsewhere, the problem of determining whether a state is entangled or not 
is  NP-hard \cite{Gu03,Gh10}. 
This does not  prevent the existence of multiple separability criteria \cite{Ho09b}. 
One example are criteria based on the rank of a bipartite mixed state $0\leqslant \rho \in \mc{M}_{d_1}\otimes \mc{M}_{d_2}$ (where $\mc{M}_d$ denotes the set of complex matrices of size $d\times d$, and $A\geqslant 0$ denotes that $A$ is positive semidefinite) \cite{Kr00,Ho00}. 
Another example is the realignment criterion \cite{Ch03,Ru03c}, which is based on the operator Schmidt decomposition of the state.

Here, we focus on the \emph{operator Schmidt rank} of $\rho$, which is the minimum $p$ so that 
\be\label{eq:rhobipar}
\rho =\sum_{\alpha=1}^p A_\alpha \otimes B_\alpha , 
\ee
where $A_\alpha \in \mc{M}_{d_1}$ and $B_\alpha\in\mc{M}_{d_2}$, i.e.\ these matrices  need not fulfill any conditions of Hermiticitiy or positivity. Clearly, a product state (i.e.\ a state without classical or quantum correlations) has $p=1$, 
but it is generally not clear what kind of correlations a  state with $p>1$ has.

The first main result of this paper is that, if a state has operator Schmidt rank  two, then it is separable.  Moreover, it can be written as a sum of only two positive semidefinite matrices at each site. 
Our proof is constructive and gives a method to obtain the positive semidefinite matrices, as we  will explicitly show. 
To state our result formally, denote by $\textrm{PSD}_d$ the set of 
 $d\times d$ positive semidefinite matrices.

\begin{theorem}[Bipartite case]\label{thm:bipartite}
Let $\rho$ be a  positive semidefinite matrix
 $0\leqslant \rho\in \mc{M}_{d_1}\otimes \mc{M}_{d_2} $ 
where $d_1,d_2$ are arbitrary, such that 
$$
\rho = 
\sum_{\alpha=1}^2 
A_{\alpha} \otimes B_\alpha 
$$
where $A_\alpha\in \mc{M}_{d_1}$ and $B_\alpha\in \mc{M}_{d_2}$. 
Then $\rho$ is separable and can be written as 
$$
\rho = 
\sum_{\alpha=1}^2 
\sigma_{\alpha}\otimes \tau_\alpha
$$
where  $\sigma_\alpha \in \PSD_{d_1}$ and 
$\tau_\alpha \in \PSD_{d_2}$. 
\end{theorem}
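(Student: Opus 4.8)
The plan is to strip the lack of Hermiticity from the factors, recast positivity of $\rho$ as a free-spectrahedral membership, and then exploit that a pointed convex cone in the plane is automatically a simplex. \emph{Step 1 (Hermitian factors).} Since $\rho$ has operator Schmidt rank two, in the given decomposition $\{A_1,A_2\}$ and $\{B_1,B_2\}$ are each linearly independent, and the spans $V=\mathrm{span}\{A_1,A_2\}\subseteq\mc{M}_{d_1}$ and $W=\mathrm{span}\{B_1,B_2\}\subseteq\mc{M}_{d_2}$ are intrinsic to $\rho$, being the row and column spaces of its reshuffling. Applying $\dagger$ to $\rho=\rho^\dagger$ shows $V^\dagger=V$ and $W^\dagger=W$, so each span admits a Hermitian basis. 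Rewriting $\rho$ in such bases and using $\rho=\rho^\dagger$ forces the $2\times 2$ coefficient matrix to be real, and it has rank two; absorbing it into a real change of basis on one side, I may assume $\rho=H_1\otimes K_1+H_2\otimes K_2$ with $H_1,H_2$ (resp.\ $K_1,K_2$) linearly independent Hermitian matrices.

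\emph{Step 2 (free spectrahedron).} I read positivity of $\rho$ as the statement that the Hermitian pair $(K_1,K_2)$ lies in the free spectrahedron $\mc{D}_H(d_2)=\{(Y_1,Y_2):Y_1\otimes H_1+Y_2\otimes H_2\geq 0\}$ of the pencil $H=(H_1,H_2)$. Its scalar level $V_+:=\mc{D}_H(1)=\{(a,b)\in\R^2: aH_1+bH_2\geq 0\}$ is a closed convex cone that is pointed, since it sits inside the PSD cone, and nonzero: compressing $\rho$ by $\mathbbm{1}\otimes|v\rangle$ places the (nonzero) numerical range of $(K_1,K_2)$ inside $V_+$. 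Two inclusions always hold: every member of $\mc{D}_H$ compresses into the maximal matrix cone $\mc{W}^{\max}(V_+)$ over $V_+$, and conversely any matrix-conic combination of points of $V_+$ lies in $\mc{D}_H$; that is, $\mc{W}^{\min}(V_+)\subseteq\mc{D}_H\subseteq\mc{W}^{\max}(V_+)$.

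\emph{Step 3 (the crux).} Here is where ``rank two'' does the work: a pointed convex cone in $\R^2$ has at most two extreme rays and is therefore simplicial. By the characterisation of simplices in the theory of matrix convex sets and operator systems, for a simplicial cone the minimal and maximal matrix cones coincide, $\mc{W}^{\min}(V_+)=\mc{W}^{\max}(V_+)$. Combined with the sandwich above, all three sets are equal, so $(K_1,K_2)\in\mc{W}^{\min}(V_+)$. I expect this min$=$max identification to be the main obstacle, since it is the only step that genuinely uses the planarity forced by operator Schmidt rank two (and it is precisely what fails for three variables, consistent with the bond-dimension-three behaviour announced in the abstract); the proof therefore hinges on invoking the correct simplex theorem and verifying its hypotheses for the possibly one-dimensional cone $V_+$.

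\emph{Step 4 (two PSD terms).} Finally I convert the $\mc{W}^{\min}$ membership into the desired decomposition. Writing $V_+=\mathrm{cone}\{g_1,g_2\}$ through its at most two extreme rays and collecting terms, membership in $\mc{W}^{\min}(V_+)$ yields $K_i=(g_1)_i\,Q_1+(g_2)_i\,Q_2$ with $Q_1,Q_2\in\PSD_{d_2}$. Substituting back gives $\rho=\sigma_1\otimes Q_1+\sigma_2\otimes Q_2$ with $\sigma_j=(g_j)_1H_1+(g_j)_2H_2\in V_+\subseteq\PSD_{d_1}$, which is exactly the claimed separable decomposition into two positive semidefinite matrices per site (one term if $V_+$ is a single ray). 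The emphasis on a connection to nonnegative matrices enters here: solving for $Q_1,Q_2$ amounts to a nonnegativity-preserving inversion against the generators $g_1,g_2$.
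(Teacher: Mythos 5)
Your proposal is correct and takes essentially the same route as the paper: Hermitianize the two-term decomposition (the paper's \cref{lem:bipartiteHerm}), read positivity of $\rho$ as membership of the second-site pair in the free spectrahedron of the first-site pencil, and invoke the simplex-cone result of Fritz--Netzer--Thom to identify the minimal and maximal operator systems over the planar cone, which yields the two-term separable decomposition. Your Steps 2--4 are precisely the content of the paper's \cref{thm:main} and \cref{cor:main}, including the same case analysis for the degenerate (single-ray and zero) cones.
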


To the best of our knowledge, this result was first proven by Cariello in \cite[Lemma 4.5]{Ca14b} (see also \cite[Theorem 3.44]{Ca17b} and this blog post \cite{Na14}), and extended to the case that one side is infinite dimensional in \cite[Theorem 3.3]{Gi18b}. (See also the related inequalities \cite[Theorem 5.2, Theorem 5.3]{Ca14b}.) 
Here we provide a new proof for this result, using other techniques and a new context. 
The techniques are based on  free spectrahedra, which form a relatively young field of study within convex algebraic geometry, and whose connections with decompositions of quantum states are explored in \cite{Ne19}. 
In particular, the core result of this paper is a relation between  separable states and minimal operator systems  (\cref{thm:main}), and its consequences  (\cref{cor:main}), which we leverage to prove \cref{thm:bipartite}. 
(See Ref.\ \cite{Bl18} for other recent connections between free spectrahedra and quantum information.) 
The context of this work is  a connection between decompositions of quantum states and decompositions of nonnegative matrices presented in \cite{De19}. 
\cref{thm:bipartite} is inspired by an analogous result for nonnegative matrices, and can be seen as a generalisation thereof, as we will show in \cref{ssec:context}.

What happens in the multipartite case? For pure states, multipartite entanglement  features many novel properties with respect to the bipartite case, see e.g.\ \cite{Sa18}.
For a mixed state that  describes the state of a spin chain in one spatial dimension, 
$0\leqslant \rho\in \mc{M}_{d_1}\otimes \mc{M}_{d_2} \otimes \cdots \otimes\mc{M}_{d_n}$,
 the multipartite analogue of Eq.\ \eqref{eq:rhobipar} is the Matrix Product Density Operator (MPDO) form \cite{Ve04d,Zw04}, 
\be
\rho = \sum_{\alpha_1,\ldots, \alpha_{n-1}=1}^D 
A^{[1]}_{\alpha_1} \otimes A^{[2]}_{\alpha_1,\alpha_2}\otimes 
 \cdots \otimes A^{[n]}_{\alpha_{n-1}} ,
\nn
\ee 
where $A^{[1]}_\alpha \in \mc{M}_{d_1}$, $A^{[2]}_{\alpha,\beta} \in \mc{M}_{d_2}$, \ldots, $A^{[n]}_{\alpha} \in \mc{M}_{d_n}$. 
The minimal such $D$ is also called the  operator Schmidt rank of $\rho$.  
If, additionally, we require that each of the local matrices is Hermitian, i.e.\ $A^{[1]}_\alpha \in \Her_{d_1}$, $A^{[2]}_{\alpha,\beta} \in \Her_{d_2}$, \ldots, $A^{[n]}_{\alpha} \in \Her_{d_n}$, where $\textrm{Her}_d$ denotes the set of  $d\times d$ Hermitian matrices, then the minimal such $D$ is called the \emph{Hermitian operator Schmidt rank} of $\rho$ \cite{De19}.

One of the problems of the  MPDO form is that the amount of quantum or classical correlations of the state (as measured by the entanglement of purification \cite{Te02}) is not upper bounded by the operator Schmidt rank \cite{De13c,De19}. 
Instead, the amount of correlations is upper bounded by the \emph{purification rank} of $\rho$, 
which is the smallest operator Schmidt rank of a matrix $L$ that satisfies $\rho = LL^\dagger$, i.e.\ that makes explicit the positivity of $\rho$ \cite{De13c,De19}.
Since the purification rank can be arbitrarily larger than the operator Schmidt rank \cite{De13c}, 
the latter generally does not tell us anything about the correlations of the state. 
In particular, there are families of states with operator Schmidt rank three whose purification rank diverges \cite{De13c}.
A similar thing happens for the Hermitian operator Schmidt rank: 
while it is lower bounded by the operator Schmidt rank, it does not allow us to upper bound the amount of correlations of the state. 
The exception to this rule is again the trivial case of product states, in which case  the operator Schmidt rank, its Hermitian counterpart and the purification rank are all one.

The second main result of this paper is that if a multipartite state has Hermitian operator Schmidt rank two, then it is separable. 
Moreover, it admits a separable decomposition of bond dimension two, which means that it can be written as a sum of at most four positive semidefinite matrices per site. Formally:

\begin{theorem}[Multipartite case]\label{thm:mainmpo}
Let $\rho$ be  a positive semidefinite matrix, 
$
0\leqslant \rho\in \mc{M}_{d_1}\otimes \mc{M}_{d_2} \otimes \cdots \otimes\mc{M}_{d_n}
$ 
where $d_1,d_2,\ldots,d_n$ are arbitrary, such that
\be
\rho = \sum_{\alpha_1,\ldots, \alpha_{n-1}=1}^2 
A^{[1]}_{\alpha_1} \otimes A^{[2]}_{\alpha_1,\alpha_2}\otimes 
 \cdots 
 \otimes A^{[n]}_{\alpha_{n-1}}
\nn
\ee 
where $A^{[1]}_\alpha \in \Her_{d_1}$, $A^{[2]}_{\alpha,\beta} \in \Her_{d_2}$, \ldots, $A^{[n]}_{\alpha} \in \Her_{d_n}$.
Then $\rho$ is separable and can be written as 
$$
\rho = 
\sum_{\alpha_1,\ldots, \alpha_{n-1}=1}^2 
\sigma^{[1]}_{\alpha_1} \otimes \sigma^{[2]}_{\alpha_1,\alpha_2}\otimes 
\cdots \otimes \sigma^{[n]}_{\alpha_{n-1}}
$$
where  $\sigma^{[1]}_\alpha \in \PSD_{d_1}$, 
$\sigma^{[2]}_{\alpha,\beta} \in \PSD_{d_2}$, 
$\ldots$, $\sigma^{[n]}_{\alpha} \in \PSD_{d_n}$.
\end{theorem}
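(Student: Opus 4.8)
**The plan is to reduce the multipartite case to the bipartite result (Theorem~\ref{thm:bipartite}) by an inductive ``peeling off one site at a time'' argument.** The key structural observation is that an MPDO of bond dimension two has a natural bipartite interpretation at each cut: grouping the first site against the remaining $n-1$ sites, we can write $\rho = \sum_{\alpha_1=1}^2 A^{[1]}_{\alpha_1} \otimes R_{\alpha_1}$, where each $R_{\alpha_1} \in \mc{M}_{d_2}\otimes\cdots\otimes\mc{M}_{d_n}$ is itself an MPDO of bond dimension two (obtained by freezing the left bond index $\alpha_1$). So the tensor $\rho$ has operator Schmidt rank at most two across the $(1\,|\,2\cdots n)$ cut. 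The strategy is to first apply the bipartite separability result to this cut to split off the first site as a sum of positive semidefinite matrices, and then propagate positivity down the chain.

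First I would establish the base case $n=2$, which is precisely Theorem~\ref{thm:bipartite} once we note that Hermiticity of the local factors is a special case of the hypothesis there (the bipartite theorem assumes nothing about Hermiticity, so it applies verbatim). For the inductive step, assuming the result for chains of length $n-1$, I would view $\rho$ across the first cut as a bipartite operator Schmidt rank two state and invoke Theorem~\ref{thm:bipartite} to obtain
$$
\rho = \sum_{\alpha_1=1}^2 \sigma^{[1]}_{\alpha_1} \otimes S_{\alpha_1},
$$
with $\sigma^{[1]}_{\alpha_1}\in\PSD_{d_1}$ and each $S_{\alpha_1}\in\PSD_{d_2\cdots d_n}$ positive semidefinite on the remaining sites. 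The crucial point to verify is that each resulting factor $S_{\alpha_1}$ on the right is \emph{itself} a Hermitian MPDO of bond dimension two on the shorter chain $\mc{M}_{d_2}\otimes\cdots\otimes\mc{M}_{d_n}$, so that the induction hypothesis applies to decompose each $S_{\alpha_1}$ into a separable form with two positive semidefinite matrices per remaining site.

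\textbf{The main obstacle I expect is controlling the bond dimension so that it stays exactly two throughout the induction, and simultaneously preserving both positivity and the Hermitian MPDO structure of the factors produced by Theorem~\ref{thm:bipartite}.} The bipartite theorem guarantees the existence of positive semidefinite $\sigma^{[1]}_{\alpha_1}$ and $S_{\alpha_1}$, but its proof is constructive, and I would need to track how the constructed right-hand factors $S_{\alpha_1}$ relate to the original MPDO tensors $A^{[2]}_{\alpha_1,\alpha_2}$: one must check that the $S_{\alpha_1}$ inherit a bond dimension two MPDO representation with Hermitian local factors over sites $2,\dots,n$, rather than some larger bond dimension that would break the induction. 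A clean way to handle this is to show that the change of basis realizing $\rho = \sum_{\alpha_1} \sigma^{[1]}_{\alpha_1}\otimes S_{\alpha_1}$ acts only on the first tensor leg and on the bond space, so that the tensor network structure on the right segment is untouched up to an invertible gauge transformation on the bond index $\alpha_1$; such a gauge transformation manifestly preserves bond dimension two. Accounting for the two choices of $\alpha_1$, each of which contributes a length-$(n-1)$ separable decomposition with two terms per remaining site, gives the claimed bound of at most four positive semidefinite matrices per site after the two decompositions are merged — the factor of four arising from the two bond indices $\alpha_{j-1},\alpha_j$ adjacent to an interior site, consistent with the bond dimension two separable form in the statement.
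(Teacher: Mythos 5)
Your proof skeleton is the same as the paper's: induction on the number of sites, peeling one site off at each step (the paper peels the \emph{last} site, you peel the first, which is immaterial), and using the fact that the multi-site factors produced by the bipartite argument are real linear combinations of the original ones---your ``gauge transformation on the bond''---so that each $S_{\alpha_1}$ is again a Hermitian MPDO of bond dimension two and the induction hypothesis applies. However, there is a genuine gap in your final merging step. Applying the induction hypothesis to $S_1$ and $S_2$ \emph{separately} yields two a priori unrelated separable decompositions
\begin{equation*}
S_{\alpha_1}=\sum_{\beta_2,\ldots,\beta_{n-1}=1}^{2}\tau^{[2],(\alpha_1)}_{\beta_2}\otimes\tau^{[3],(\alpha_1)}_{\beta_2,\beta_3}\otimes\cdots\otimes\tau^{[n],(\alpha_1)}_{\beta_{n-1}},
\end{equation*}
whose positive semidefinite matrices at sites $3,\ldots,n$ depend on $\alpha_1$. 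Substituting these into $\rho=\sum_{\alpha_1}\sigma^{[1]}_{\alpha_1}\otimes S_{\alpha_1}$, every interior site matrix carries the branch label $\alpha_1$ in addition to its two adjacent bond indices, so the merged decomposition has bond dimension $4$ at the interior bonds, i.e.\ up to eight matrices per interior site. Your closing count conflates this with the theorem's claim: the ``four matrices per site'' in the statement refers to a bond-dimension-\emph{two} decomposition, with matrices $\sigma^{[l]}_{\alpha_{l-1},\alpha_l}$, $\alpha_{l-1},\alpha_l\in\{1,2\}$, which your merge does not produce. As written, your argument proves that $\rho$ is separable, but not that $\seprank(\rho)=2$, which is part of the statement.

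The missing idea---and the heart of the paper's induction---is a \emph{strengthened} induction hypothesis: if two Hermitian bond-dimension-two MPDOs share the same local tensors at every site except the one adjacent to the cut (your $S_1$ and $S_2$ differ only in their site-$2$ tensors, where the real linear combination is absorbed), then their separable decompositions can be chosen to share the same positive semidefinite matrices at all common sites, differing only at that one site. With this, the label $\alpha_1$ survives only on the site-$2$ matrices and the merged decomposition has bond dimension two. Establishing this property requires some care in how \cref{thm:main} and \cref{cor:main} are invoked: the multi-site factor must play the role of the $P_i$'s, so that the linear-combination coefficients are extreme rays of $\mathrm{FS}_1$ of the multi-site factors and hence are independent of the peeled site's tensors; this is what makes the property propagate through the induction, and it is what the paper checks (in the bipartite base case, the matrices on the shared side are built solely from that side's data). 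If instead one uses the cone of the single peeled site, the sharing property fails already in the bipartite case, since then the matrices on the other side solve $R_i=\sum_j v_{ji}H_j$ with $v$ depending on the peeled site's tensors.
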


This situation is summarised in \cref{tab:bipartite}.

\begin{table}[ht]\centering
\begin{tabular}{c | c | c | c}
States & Product  & Separable & \\
\hline
$\hosr$ & 1 & 2 & 3\\
$\purirank$ &1 & 2 & unbounded\\ 
$\seprank$ &1& 2 & unbounded
\end{tabular}
\caption{Multipartite mixed states $\rho$ with hermitian operator Schmidt rank  (cf.\ \cref{def:mpdo})  $\hosr(\rho)=1$ are product states, and the 
separable rank $\seprank$ (\cref{def:sep}) and 
purification rank $\purirank$ (\cref{def:lp}) need also be 1. 
If $\hosr(\rho)=2$ then $\rho$ must be separable and  the other two ranks also need be 2 (\cref{thm:mainmpo}).
If $\hosr(\rho)=3$ and $\rho$ is separable, then the other two ranks may diverge with the system size \cite{De13c}. }
\label{tab:bipartite}
\end{table}

We remark that \cite[Corollary 4.8]{Ca14b} showed that if a state $\rho$ has  tensor rank 2, then it is separable. Recall that the tensor rank, $\tsr(\rho)$, is the minimal  $D$ required to express $\rho$ as
$$
\rho=\sum_{\alpha=1}^D A_{\alpha}^{[1]} \otimes A_{\alpha}^{[2]}\otimes \ldots A_{\alpha}^{[n]}. 
$$
\cref{thm:mainmpo}, in contrast, shows that if the Hermitian operator Schmidt rank of a state $\rho$ is 2, then $\rho$ is separable and its separable rank is 2  
(the latter will be defined in \cref{def:sep}).
Since both the hermitian operator Schmidt rank ($\hosr$) and the tensor rank ($\tsr$) are  lower bounded by the operator Schmidt rank ($\osr$),  
$$ 
\osr\leq \hosr,\quad \osr\leq \tsr,  
$$
our \cref{thm:mainmpo} does not follow from \cite[Corollary 4.8]{Ca14b}
(For more relations between the ranks of several types of tensor decompositions, see \cite{De19d}).
Note also that  \cref{thm:mainmpo} does not only show that $\rho$ is separable but also that the separable rank (see \cref{def:sep}) is 2.

This paper is structured as follows.  
In \cref{sec:pre} we present the preliminary material needed for this work. 
In \cref{sec:sep} we prove the core result, \cref{thm:main}. 
In \cref{sec:bipartite} we focus on the bipartite case: 
we prove \cref{thm:bipartite}, provide a method to obtain the separable decomposition,  and explore several implications of this result. 
In \cref{sec:mpdo} we focus on the multipartite case: 
we prove \cref{thm:mainmpo}, and explore its implications. 
Finally, in  \cref{sec:concl} we conclude and present an outlook.

\section{Preliminaries}
\label{sec:pre}

Here we present the preliminary material needed to prove the results of this work.
First we will define the relevant decompositions of mixed states (\cref{ssec:decomp}),
and finally define free spectrahedra and minimal operator systems (\cref{ssec:fs}).

We use the following notation. 
The set of positive semidefinite matrices, complex Hermitian matrices, and complex matrices of size $r\times r$ is denoted $\textrm{PSD}_r$, $\textrm{Her}_r$ and $\M_r$, respectively. 
We denote that a matrix $A$ is positive semidefinite (psd) by $A\geqslant 0$. The identity matrix is denoted by $I$.
We will also often ignore normalizations, as they do not modify any of the ranks  \cite{De19}. 
For this reason, we often refer to positive semidefinite matrices as states. 

\subsection{Decompositions of mixed states}
\label{ssec:decomp}

Throughout this paper we consider a multipartite positive semidefinite matrix $0\leqslant \rho\in \mc{M}_{d_1}\otimes \mc{M}_{d_2} \otimes \cdots \otimes\mc{M}_{d_n}$. We now review the MPDO, the Hermitian MPDO form, the separable decomposition and the local purification of $\rho$. They are all discussed and characterized in Ref.\ \cite{De19}.

\begin{definition}\label{def:mpdo}
The \emph{Matrix Product Density Operator (MPDO)} form \cite{Ve04d,Zw04} of $\rho$ is given by 
\be
\rho = \sum_{\alpha_1,\ldots,\alpha_{n-1}=1}^D 
A^{[1]}_{\alpha_1} \otimes 
A^{[2]}_{\alpha_1,\alpha_2} \otimes 
\cdots \otimes
A^{[n]}_{\alpha_{n-1}} ,
\label{eq:mpdo}
\ee
where $A^{[l]}_{\alpha}\in\M_{d_l}$ for $l=1,n$, and
$A^{[l]}_{\alpha,\beta}\in\M_{d_l}$ for $1<l<n$. 
The minimum such $D$ is called the operator Schmidt rank of $\rho$, denoted $\osr(\rho)$. 
\end{definition}

The MPDO form can also be defined for matrices which are not Hermitian or positive semidefinite. In this case, it is simply called the Matrix Product Operator form (since $\rho$ is no longer a density operator), and the corresponding minimum dimension is also called the operator Schmidt rank. 

We also need to consider a modified version of the MPDO form, in which the local matrices are Hermitian. 
\begin{definition}\label{def:hermpdo}
The \emph{Hermitian Matrix Product Density Operator (Hermitian MPDO)} form  \cite{De19} of $\rho$ is given by 
\be
\rho = \sum_{\alpha_1,\ldots,\alpha_{n-1}=1}^D 
A^{[1]}_{\alpha_1} \otimes 
A^{[2]}_{\alpha_1,\alpha_2} \otimes 
\cdots \otimes
A^{[n]}_{\alpha_{n-1}} ,
\label{eq:hmpdo}
\ee
where $A^{[l]}_{\alpha}\in\Her_{d_l}$ for $l=1,n$, and
$A^{[l]}_{\alpha,\alpha'}\in\Her_{d_l}$ for $1<l<n$. 
The minimum such $D$ is called the Hermitian operator Schmidt rank of $\rho$, denoted $\hosr(\rho)$. 
\end{definition}

In this paper, we say that  $\rho$ is \emph{separable } if it admits a separable decomposition. That this is a meaningful definition is shown in \cite{De19}.  

\begin{definition}\label{def:sep} 
A \emph{separable decomposition} of $\rho$  is a form where 
\be
\rho = \sum_{\alpha_1,\ldots,\alpha_{n-1}=1}^D \sigma^{[1]}_{\alpha_{1}} \otimes
\sigma^{[2]}_{\alpha_{1},\alpha_2}  \otimes \cdots \otimes 
\sigma^{[n]}_{\alpha_{n-1}}  
\label{eq:sep}
\ee
where each of these matrices is psd, i.e.\ $\sigma^{[1]}_{\alpha}\in \PSD_{d_1}$, $\sigma^{[n]}_{\alpha}\in\PSD_{d_n}$,
  and $\sigma^{[j]}_{\alpha,\beta}\in\PSD_{d_j}$ for $1<j<n$.
The minimal such $D$ is called the separable rank of $\rho$, denoted $\textrm{sep-rank}( \rho) $. 
\end{definition}

We now review the local purification form. 

\begin{definition}\label{def:lp} 
The \emph{local purification form} of  $\rho$ is an expression $\rho=LL^\dagger$ where $L$ is in Matrix Product Operator form. The minimal operator Schmidt rank of such an $L$ is called the purification rank of $\rho$, denoted $\purirank(\rho)$:
$$
\purirank(\rho) =\min\{\osr(L) | \rho = LL^\dagger\} .
$$
\end{definition}

Finally we  review some results about these ranks. 
\begin{proposition}[\cite{De19}] \label{thm:basic}
Let 
$$
0\leqslant \rho\in \mc{M}_{d_1}\otimes \mc{M}_{d_2} \otimes \cdots \otimes\mc{M}_{d_n} . 
$$ 
 Then 
\begin{itemize} 
\item[(i)] $\osr(\rho) = 1\iff \purirank(\rho) =1 \iff \seprank(\rho)=1.$
\item[(ii)] $\osr(\rho)\leq \purirank(\rho)^2$, and this bound is tight for pure states. 
\item[(iii)] If $\rho$ is separable, then 
$ \osr(\rho)\leq \seprank(\rho)$, and
$ \purirank(\rho)\leq \seprank(\rho)$.
\end{itemize}
\end{proposition}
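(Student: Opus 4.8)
The plan is to handle the three parts in increasing order of difficulty, since each reduces to a short structural fact about tensor products of matrices, and I would assume the definitions of the ranks from \cref{def:mpdo,def:sep,def:lp}.

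For part (i) I would close a cycle of implications. The implication $\seprank(\rho)=1\Rightarrow\osr(\rho)=1$ is immediate, as a product $\sigma^{[1]}\otimes\cdots\otimes\sigma^{[n]}$ of psd matrices is in particular an MPDO of bond dimension one; it also gives $\purirank(\rho)=1$ by taking $L=\sqrt{\sigma^{[1]}}\otimes\cdots\otimes\sqrt{\sigma^{[n]}}$, which satisfies $LL^\dagger=\rho$ and $\osr(L)=1$. Conversely $\purirank(\rho)=1$ means $\rho=LL^\dagger$ with $L=M^{[1]}\otimes\cdots\otimes M^{[n]}$, so $\rho=\bigotimes_l M^{[l]}(M^{[l]})^\dagger$ is a product of psd matrices, forcing $\seprank(\rho)=1$. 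The only step with content is $\osr(\rho)=1\Rightarrow\seprank(\rho)=1$: here $\rho=A^{[1]}\otimes\cdots\otimes A^{[n]}\geqslant 0$, and I would invoke the elementary fact that for nonzero $A,B$ one has $A\otimes B\geqslant 0$ iff there is a sign $\epsilon\in\{\pm1\}$ with $\epsilon A\geqslant 0$ and $\epsilon B\geqslant 0$. Grouping $\rho=A^{[1]}\otimes(A^{[2]}\otimes\cdots\otimes A^{[n]})$ and iterating this fact lets me rescale each factor to be psd while keeping the product equal to $\rho$, yielding a separable decomposition of bond dimension one.

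For part (ii), let $r=\purirank(\rho)$ and write $\rho=LL^\dagger$ with $L$ an MPO of bond dimension $r$. Distributing the adjoint over the tensor factors, each local tensor of $\rho$ becomes a product $L^{[l]}_{\alpha_{l-1},\alpha_l}(L^{[l]}_{\beta_{l-1},\beta_l})^\dagger$, so the bond index of $\rho$ at each cut is the pair $(\alpha_k,\beta_k)$, ranging over $r^2$ values; hence $\osr(\rho)\leqslant r^2$. For tightness I would take a bipartite pure state $\rho=|\psi\rangle\langle\psi|$ with $|\psi\rangle$ of Schmidt rank $s$: then $\rho$ expands into the $s^2$ linearly independent elementary tensors $|i\rangle\langle j|\otimes|i\rangle\langle j|$, so $\osr(\rho)=s^2$, while $L=|\psi\rangle\langle 0|$ shows $\purirank(\rho)\leqslant s$; combined with the bound just proved this forces $\purirank(\rho)=s$, so equality holds.

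For part (iii), $\osr(\rho)\leqslant\seprank(\rho)$ is again immediate, since a separable decomposition is a special MPDO. The substantive claim is $\purirank(\rho)\leqslant\seprank(\rho)$: writing a separable decomposition of bond dimension $D=\seprank(\rho)$ and taking a local square root $\kappa^{[l]}_{\alpha_{l-1},\alpha_l}$ of each psd factor (so that $\kappa\kappa^\dagger$ recovers that factor), I would build the purification by attaching to each bond an orthonormal ancilla register $\{|\alpha_k\rangle\}_{\alpha_k=1}^D$ and setting $L=\sum_{\alpha_1,\ldots,\alpha_{n-1}}\kappa^{[1]}_{\alpha_1}\otimes\cdots\otimes\kappa^{[n]}_{\alpha_{n-1}}$ with the classical index carried along on the ancillas. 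Orthogonality of the ancilla vectors annihilates the cross terms in $LL^\dagger$, leaving exactly $\rho$, and since the virtual bond index is precisely $\alpha_k\in\{1,\ldots,D\}$, we get $\osr(L)\leqslant D$. The genuinely delicate point, and the main obstacle, is this last construction: one must arrange the decomposition index so that $L$ keeps bond dimension exactly $D$ rather than $D^2$, which is what the orthogonal ancilla bookkeeping secures; the sign lemma in part (i) is the only other non-mechanical ingredient, everything else being direct index manipulation.
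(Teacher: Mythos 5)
The paper itself offers no proof of this proposition: it is imported verbatim from Ref.~\cite{De19}, so there is nothing internal to compare against, and your proposal must be judged on its own merits. Most of it is correct and follows what are indeed the standard arguments: the cycle of implications in (i) with the square-root purification $L=\sqrt{\sigma^{[1]}}\otimes\cdots\otimes\sqrt{\sigma^{[n]}}$; the bound $\osr(\rho)\leq\purirank(\rho)^2$ in (ii) by doubling the bond index of $LL^\dagger$, with tightness witnessed by $L=|\psi\rangle\langle 0,\ldots,0|$ for a pure state of Schmidt rank $s$; and, for (iii), the purification of a separable decomposition obtained by taking local square roots and tagging each bond index with orthonormal ancilla vectors. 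That last construction is right, and you correctly identify that the orthogonality of the ancillas is what keeps the bond dimension of $L$ at $D$ rather than $D^2$.

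The one genuine flaw is the ``elementary fact'' on which you hang the only nontrivial implication of (i). As stated --- $A\otimes B\geqslant 0$ iff $\epsilon A\geqslant 0$ and $\epsilon B\geqslant 0$ for some sign $\epsilon\in\{\pm 1\}$ --- it is false: take $A=iI$ and $B=-iI$, so that $A\otimes B=I\otimes I\geqslant 0$ while neither $\pm A$ nor $\pm B$ is even Hermitian. This matters precisely here because \cref{def:mpdo} allows arbitrary complex local matrices, so $\osr(\rho)=1$ hands you exactly such factors. The correct statement requires a phase: if $A\otimes B\geqslant 0$ with $A,B\neq 0$, then Hermiticity of $A\otimes B$ forces $A^\dagger=cA$ and $B^\dagger=c^{-1}B$ for some unimodular $c=e^{i\theta}$, so that $e^{i\theta/2}A$ and $e^{-i\theta/2}B$ are Hermitian; the eigenvalue-sign argument then yields a unimodular scalar $\lambda$ with $\lambda A\geqslant 0$ and $\overline{\lambda}B\geqslant 0$, and $(\lambda A)\otimes(\overline{\lambda}B)=A\otimes B$. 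With this phase-corrected lemma your iteration across the $n$ factors goes through verbatim, so the gap is local and easily repaired; but as written, the step ``rescale each factor to be psd'' is not justified by the fact you invoke, since a sign flip alone cannot fix a non-Hermitian factor.
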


It is also known that there is a separation between $\osr(\rho)$ and $\purirank(\rho)$, and between  $\purirank(\rho)$ and  $\seprank(\rho)$ \cite{De19}.

\subsection{Free spectrahedra}
\label{ssec:fs}

Here we review the definitions of (free) spectrahedra, which are discussed more thoroughly in \cite{Ne19}, 
as well as the definition of minimal operator system. 

\begin{definition} 
A {\it spectrahedron} is a set of the following form 
$$
\textrm{S}(A_1,\ldots, A_m) = \{(b_1,\ldots, b_m)\in \R^m | \sum_{i=1}^m b_i A_i \geqslant 0\}.
$$
where $A_1,\ldots A_m \in \Her_s$ are  Hermitian matrices.
\end{definition}

The free spectrahedron is its non-commutative generalization:  

\begin{definition}
For $A_1,\ldots, A_m\in \Her_s$  and $r\in\N$ we define 
\be
\textrm{FS}_r(A_1,\ldots, A_m):= && \{ (B_1,\ldots, B_m)\in{\rm Her}_r^m\mid  \nn\\ 
&&\sum_{i=1}^m A_i\otimes B_i\geqslant 0\}
\ee
and call it the {\it $r$-level} of the free spectrahedron defined by $A_1,\ldots, A_m$. The collection 
$$
\textrm{FS}(A_1,\ldots, A_m):=\left( \textrm{FS}_r(A_1,\ldots, A_m) \right)_{r\geq 1}
$$ 
is called the {\it free spectrahedron} defined by $A_1,\ldots, A_m$.
\end{definition}

Note that the free spectrahedron at level 1 coincides with the spectrahedron. 

Finally we review the definition of minimal operator system  (see, for example, Ref.\ \cite{Fr16} for more details). 

\begin{definition}
Let $C\subseteq \R^m$ be a closed salient convex cone 
(i.e.\ a closed cone that does not contain a nontrivial subspace) with nonempty interior.
Define 
$$
C_r^{\min} := \left\{\sum_{i=1}^n c_i\otimes P_i \mid n\in\N,  c_i \in C, P_i \in \PSD_{r}\right\}
$$ 
Note that since the $c_i$ are from $\R^m$, elements from $C_r^{\min}$ can be understood as $m$-tuples of Hermitian matrices of size $r$ (exactly as elements from the $r$-level of a free spectrahedron).

The minimal operator system is defined as 
$$
C^{\min} = \left(C_r^{\min} \right)_{r\geq 1} .
$$
\end{definition}

\section{Separable states and minimal operator systems}
\label{sec:sep}

In this section we prove the core result of this paper, namely \cref{thm:main}, which is a relation between separable states and  minimal operator systems. We will extract several consequences thereof in  \cref{cor:main}, which  will be exploited later on to prove \cref{thm:bipartite} and \cref{thm:mainmpo}.

\begin{theorem} 
\label{thm:main}
Let $P_1,\ldots, P_m\in{\rm Her}_s$ be $\mathbb R$-linearly independent, and $Q_1,\ldots, Q_m\in{\rm Her}_t$ with 
\be
\rho=\sum_{i=1}^m P_i\otimes Q_i \geqslant 0. 
\label{eq:rhothmmain}
\ee
Define  
$$
C:={\rm FS}_1(P_1,\ldots, P_m)\subseteq \mathbb R^m
$$ 
and 
$$
 V:={\rm span}_{\mathbb R}\left\{P_1,\ldots, P_m \right\}\subseteq{\rm Her}_s.
$$
Then   the following are equivalent:
\begin{enumerate}
\item \label{thm:1}
 $(Q_1,\ldots, Q_m)\in C^{\min}_t$
\item  \label{thm:2}
$\rho$ admits a separable decomposition 
\be
\rho=\sum_{j=1}^n \tilde P_j\otimes \tilde Q_j
\label{eq:rhothmsep}
\ee
with $\tilde P_j\in V$ for all $j$.
\end{enumerate}
\end{theorem}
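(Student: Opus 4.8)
The plan is to unfold both statements into explicit coefficient data and then match them; the only genuine input beyond bookkeeping is that the $\mathbb R$-linear independence of $P_1,\dots,P_m$ forces the ``$Q$-side'' of any decomposition of $\rho$ along the $P_k$ to be uniquely determined. First I would record the dictionary coming from the definitions. By definition of $C^{\min}_t$, membership $(Q_1,\dots,Q_m)\in C^{\min}_t$ means that there are finitely many vectors $c_j=(c_{j1},\dots,c_{jm})\in C$ and matrices $R_j\in\PSD_t$ with $Q_k=\sum_j c_{jk}R_j$ for each $k$. Moreover $c_j\in C={\rm FS}_1(P_1,\dots,P_m)$ is, by the defining inequality of the spectrahedron, exactly the condition $\sum_k c_{jk}P_k\geqslant 0$. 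I keep this dictionary fixed throughout.

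For the implication (i) $\Rightarrow$ (ii) I would simply substitute and regroup. Given $Q_k=\sum_j c_{jk}R_j$, inserting this into $\rho=\sum_k P_k\otimes Q_k$ and exchanging the order of summation gives $\rho=\sum_j\bigl(\sum_k c_{jk}P_k\bigr)\otimes R_j$. Setting $\tilde P_j:=\sum_k c_{jk}P_k$ and $\tilde Q_j:=R_j$, the factor $\tilde P_j$ lies in $V$ by construction and is psd because $c_j\in C$, while $\tilde Q_j\in\PSD_t$; this is exactly a separable decomposition of $\rho$ with all left factors in $V$.

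For the converse (ii) $\Rightarrow$ (i) I start from a separable decomposition $\rho=\sum_j\tilde P_j\otimes\tilde Q_j$ with $\tilde P_j\in V$ psd and $\tilde Q_j$ psd. Writing $\tilde P_j=\sum_k c_{jk}P_k$ (possible since $\tilde P_j\in V$), positivity of $\tilde P_j$ gives $c_j\in C$. Regrouping yields $\rho=\sum_k P_k\otimes\bigl(\sum_j c_{jk}\tilde Q_j\bigr)$, and comparing with the original $\rho=\sum_k P_k\otimes Q_k$ I want to conclude $Q_k=\sum_j c_{jk}\tilde Q_j$ for every $k$; this is precisely the witness data showing $(Q_1,\dots,Q_m)\in C^{\min}_t$.

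The one step requiring care --- which I would flag as the main (if mild) obstacle --- is this last coefficient comparison. It is licensed by the hypothesis that $P_1,\dots,P_m$ are $\mathbb R$-linearly independent: the linear map $(X_1,\dots,X_m)\mapsto\sum_k P_k\otimes X_k$ from $(\Her_t)^m$ into $\Her_s\otimes\Her_t$ is then injective, since choosing a dual family $\varphi_1,\dots,\varphi_m$ of real functionals on $\Her_s$ with $\varphi_l(P_k)=\delta_{lk}$ and applying $\varphi_l\otimes\mathrm{id}$ recovers $X_l$. Hence the two groupings of $\rho$ must have equal $Q$-coefficients, giving the claim; without linear independence this comparison would fail, which is exactly why the hypothesis is imposed. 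Everything else is substitution together with the direct translation between the cone membership $c\in C$ and positivity of $\sum_k c_kP_k$.
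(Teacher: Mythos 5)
Your proof is correct and follows essentially the same route as the paper's: both directions are the same substitute-and-regroup argument, translating membership in $C^{\min}_t$ into coefficient data and back, with $\mathbb{R}$-linear independence of the $P_k$ licensing the coefficient comparison in (ii)~$\Rightarrow$~(i). Your explicit justification of that comparison via dual functionals $\varphi_l\otimes\mathrm{id}$ simply fills in a detail the paper states without proof.
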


\begin{proof} 
\ref{thm:1}$\Rightarrow$\ref{thm:2}:\ By definition of $C^{\min}$ there are $v_1,\ldots, v_n \in C$ and $H_1,\ldots, H_m\in{\rm Her}_t$ with all $H_i\geqslant 0$ 
and 
$$
Q_i=\sum_{j=1}^n v_{ji}H_j
$$ 
for all $i$. 
Substituting in the expression of $\rho$ (Eq. \eqref{eq:rhothmmain}), we obtain 
$$
\rho=\sum_{j=1}^n \left(\sum_{i=1}^m v_{ji}P_i\right)\otimes H_j.
$$ 
From $v_j\in C={\rm FS}_1(P_1,\ldots, P_m)$ it follows that $\sum_i v_{ji}P_i\geqslant 0$ for all $j$,  which proves \ref{thm:2}.

\ref{thm:2} $\Rightarrow$ \ref{thm:1}:\ 
Assume that $\rho$ is of form \eqref{eq:rhothmsep}  with all $\tilde P_j,\tilde Q_j\geqslant 0$ and 
$$
\tilde P_j=\sum_{i=1}^m v_{ji}P_i
$$ 
for some $v_j=(v_{j1},\ldots, v_{jm})\in\mathbb R^m$. 
Substituting in the expression of $\rho$ [Eq.\ \eqref{eq:rhothmsep}] we obtain that 
$$
\rho=
\sum_{i=1}^m P_i \otimes \left(\sum_{j=1}^n v_{ji}\tilde Q_j\right).
$$ 
Comparing this with expression \eqref{eq:rhothmmain}  and using that $P_1,\ldots, P_m$ are linearly independent, we obtain $$Q_i=\sum_{j=1}^n v_{ji}\tilde Q_j$$ for all $i$, i.e.\ 
$$
(Q_1,\ldots, Q_m)=\sum_{j=1}^n v_j\otimes\tilde Q_j\in C^{\min}_t.
$$
\end{proof}

\begin{remark}
In Theorem \ref{thm:main}, ``\ref{thm:2} $\Rightarrow$\ref{thm:1}"\ fails without the assumption $\tilde P_j\in V$ for all $j$. To see this, consider the following example taken from \cite{He13}. 
Consider 
$$
P_1=\left(\begin{array}{ccc}1 & 0 & 0 \\0 & 1 & 0 \\0 & 0 & 1\end{array}\right),\ 
P_2=\left(\begin{array}{ccc}0 & 1 & 0 \\1 & 0 & 0 \\0 & 0 & 0\end{array}\right),\  
P_3=\left(\begin{array}{ccc}0 & 0 & 1 \\0 & 0 & 0 \\1 & 0 & 0\end{array}\right)
$$  
  $$Q_1=\left(\begin{array}{cc}1 & 0 \\0 & 1\end{array}\right),\ Q_2=\left(\begin{array}{cc}\tfrac12 & 0 \\0 & 0\end{array}\right),\ Q_3=\left(\begin{array}{cc}0 & \tfrac34 \\\tfrac34 & 0\end{array}\right)$$ and 
  $$
  \rho:=\sum_{i=1}^3 P_i\otimes Q_i
  $$ 
   One easily checks that $\rho\geqslant 0$, and since all appearing matrices are symmetric, $\rho$ passes the separability test of partial transposition \cite{Ho09b}. This test is however sufficient for separability in the $3\times 2$ case, so $\rho$ is separable.

Now consider the Pauli matrices 
 $$
 \sigma_z = \left(\begin{array}{cc}1 & 0  \\0 & -1\end{array}\right) , \quad 
 \sigma_x = \left(\begin{array}{cc}0 & 1  \\1 & 0\end{array}\right),
 $$ and 
note that 
$$
C={\rm FS}_1(P_1,P_2,P_3)={\rm FS}_1(I_2,\sigma_z,\sigma_x)\subseteq\mathbb R^3
$$
 is the circular cone, i.e.\ the cone defined by the real numbers $a,b,c$ such that
 $b^2+c^2\leq a^2$, and  $a\geq 0$. 
However, on the other hand, 
 $$
 I_2\otimes Q_1+ \sigma_z\otimes Q_2 +\sigma_3\otimes Q_3\ngeqslant 0,
 $$ 
 which implies that $(Q_1,Q_2,Q_3)\notin C^{\min}_2.$
\end{remark}

\begin{corollary} 
\label{cor:main}
Let $P_1,\ldots, P_m\in{\rm Her}_s$ be $\mathbb R$-linearly independent, and set  $$C:={\rm FS}_1(P_1,\ldots, P_m)\subseteq\mathbb R^m.$$

\begin{enumerate}
\item \label{cor:1}
If  ${\rm FS}(P_1,\ldots, P_m) =C^{\min}$ then for any choice of  $Q_1,\ldots, Q_m\in{\rm Her}_t$ with 
$$\rho=\sum_{i=1}^m P_i\otimes Q_i\geqslant 0,$$ the state $\rho$ is separable.

\item \label{cor:2}
 If $C$ is a simplex cone, then for any choice of  $Q_1,\ldots, Q_m\in{\rm Her}_t$ with 
 $$\rho=\sum_{i=1}^m P_i\otimes Q_i\geqslant 0,$$ 
 the state $\rho$ is separable. Furthermore, it admits a separable decomposition with at most $m$ terms, that is, $\seprank(\rho)\leq  m$. 

\item \label{cor:3} 
If $m=2$, then for any choice of  $Q_1, Q_2\in{\rm Her}_t$ with 
 $$\rho=P_1\otimes Q_1 + P_2\otimes Q_2\geqslant 0,$$ 
  the state $\rho$ is separable and admits a separable decomposition with at most $2$ terms, i.e.\ $\seprank(\rho)\leq 2$.
\end{enumerate}
\end{corollary}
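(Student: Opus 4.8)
The plan is to reduce item~\ref{cor:3} to item~\ref{cor:2}, which in turn rests on \cref{cor:main}\ref{cor:1} and ultimately on \cref{thm:main}. The key observation is that when $m=2$, the cone $C := \mathrm{FS}_1(P_1,P_2) \subseteq \mathbb{R}^2$ is automatically a simplex cone, so the only real content is verifying this geometric fact and then invoking item~\ref{cor:2}. First I would recall how item~\ref{cor:2} follows from item~\ref{cor:1}: if $C$ is a simplex cone in $\mathbb{R}^m$, then $C$ is generated by exactly $m$ extremal rays $r_1,\ldots,r_m$, and for such a cone one has $\mathrm{FS}(P_1,\ldots,P_m)=C^{\min}$, so item~\ref{cor:1} gives separability; moreover the simplex structure lets one decompose any $(Q_1,\ldots,Q_m)\in C^{\min}_t$ using only the $m$ generating rays, yielding a separable decomposition with at most $m$ terms.

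The heart of item~\ref{cor:3} is therefore the claim that every cone $C=\mathrm{FS}_1(P_1,P_2)$ arising from two $\mathbb{R}$-linearly independent Hermitian matrices is a simplex cone in $\mathbb{R}^2$. Here $C = \{(b_1,b_2)\in\mathbb{R}^2 \mid b_1 P_1 + b_2 P_2 \geqslant 0\}$. I would argue that $C$ is a closed convex cone in $\mathbb{R}^2$; by linear independence it is not all of $\mathbb{R}^2$ (some direction forces a negative eigenvalue) and, using that $\rho=P_1\otimes Q_1+P_2\otimes Q_2\geqslant 0$ is nonzero, it has nonempty interior. The essential point is that $C$ is salient: a closed salient convex cone with nonempty interior in $\mathbb{R}^2$ is spanned by exactly two extremal rays and is hence a simplex cone. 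Saliency follows because if $C$ contained a line through the origin, then both $\pm(b_1,b_2)$ would satisfy $b_1P_1+b_2P_2\geqslant 0$, forcing $b_1P_1+b_2P_2=0$, which contradicts the linear independence of $P_1,P_2$ for any nonzero $(b_1,b_2)$.

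The main obstacle I anticipate is the degenerate case where $C$ has empty interior or is lower-dimensional—for instance if $C$ reduces to a single ray or to $\{0\}$. I would handle this by noting that the existence of the psd state $\rho=P_1\otimes Q_1+P_2\otimes Q_2\geqslant 0$ constrains the geometry: I would check whether the argument still produces a simplex cone (a ray is a $1$-dimensional simplex cone) or whether such cases must be treated separately, possibly by reducing to a single tensor factor where the result is trivial. Once the simplex-cone claim is secured in all cases, item~\ref{cor:3} is an immediate specialization of item~\ref{cor:2} with $m=2$, giving both separability and $\seprank(\rho)\leq 2$.
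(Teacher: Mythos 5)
Your reduction of item \ref{cor:3} to item \ref{cor:2} is the right idea, and in the generic case it is exactly the paper's route; your saliency argument (a line in $C$ would give $b_1P_1+b_2P_2\geqslant 0$ and $-(b_1P_1+b_2P_2)\geqslant 0$, hence $b_1P_1+b_2P_2=0$, contradicting linear independence) is also correct. The genuine gap is your claim that the existence of a nonzero $\rho=P_1\otimes Q_1+P_2\otimes Q_2\geqslant 0$ forces $C$ to have nonempty interior, hence to be a simplex cone. This is false. Take
$$
P_1=\begin{pmatrix}1&0\\0&0\end{pmatrix},\qquad P_2=\begin{pmatrix}0&1\\1&0\end{pmatrix},\qquad Q_1=I,\qquad Q_2=0 .
$$
Then $\rho=P_1\otimes I\geqslant 0$ is nonzero, while $b_1P_1+b_2P_2=\left(\begin{smallmatrix}b_1&b_2\\ b_2&0\end{smallmatrix}\right)$ is psd if and only if $b_2=0$ and $b_1\geqslant 0$, so $C$ is a single ray. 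Thus the degenerate cases you flag as an ``obstacle'' genuinely occur and cannot be argued away; your proposal leaves them at the level of ``I would check whether'', which is not a proof. Your fallback that a ray is a $1$-dimensional simplex cone does not rescue the reduction either: item \ref{cor:2} rests on the equality $C^{\min}={\rm FS}(P_1,\ldots,P_m)$ from \cite{Fr16}, and the minimal operator system $C^{\min}$ is only defined for closed salient cones with \emph{nonempty interior} in $\mathbb R^m$, which a ray in $\mathbb R^2$ does not have.

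What is missing is a direct treatment of the two degenerate cases, which is precisely what the paper's proof supplies. If $C$ is a single ray, $\rho$ is in fact a product state: after an invertible real linear change of the pair $(P_1,P_2)$ (with the inverse transpose change applied to $(Q_1,Q_2)$, leaving $\rho$ fixed) you may assume the ray is $\{(t,0):t\geqslant 0\}$; then for every unit vector $u\in\mathbb C^t$ one has $(I\otimes \la u|)\,\rho\,(I\otimes |u\ra)=\la u|Q_1|u\ra P_1+\la u|Q_2|u\ra P_2\geqslant 0$, i.e.\ $(\la u|Q_1|u\ra,\la u|Q_2|u\ra)\in C$, forcing $\la u|Q_2|u\ra=0$ for all $u$, hence $Q_2=0$, $Q_1\geqslant 0$, and $\rho=P_1\otimes Q_1$ with $P_1\geqslant 0$, so $\seprank(\rho)\leq 1$. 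If $C=\{0\}$, the same compression argument gives $Q_1=Q_2=0$, i.e.\ $\rho=0$, which is trivial. With these two cases added, your argument becomes complete and coincides with the paper's case distinction.
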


\begin{proof}
\ref{cor:1}:
The condition $\rho\geqslant 0$ just means $$(Q_1,\ldots, Q_m)\in {\rm FS}_t(P_1,\ldots, P_m)=C_t^{\min}.$$ So the result follows from Theorem \ref{thm:main}.

\ref{cor:2}:
From \cite{Fr16} we know that  if $C$ is a simplex cone, then 
$$
C^{\min}={\rm FS}(P_1,\ldots, P_m) 
$$ 
 is automatically true. So the result follows from \ref{cor:1}. Note that the number of terms in the constructed separable decomposition is $m$, since in the proof of  \cref{thm:main} ``\ref{thm:1} $\Rightarrow$ \ref{thm:2}" we can choose the $v_i$ as the extreme rays of $C$, of which  there are $m$ many.

\ref{cor:3}: For  $C={\rm FS}_1(P_1,P_2)\subseteq \mathbb R^2$ there are three possibilities. First, $C$ might be  a simplex cone, so the  claim follows from \ref{cor:2}. Second, $C$ might be a single ray. In this case one easily checks that $\rho$ is in fact a product state, so $\seprank(\rho)= 1$. Finally $C=\{0\}$ implies $\rho=0$, so everything is trivial.
\end{proof}

\section{Bipartite states with operator Schmidt rank two}
\label{sec:bipartite}

In this section we focus on bipartite states which have operator Schmidt rank two.
We will first  prove \cref{thm:bipartite} (\cref{ssec:proofthm1}), and then 
 provide a method to obtain the separable decomposition of such bipartite states (\cref{ssec:method}). 
 The method closely follows the proof and thereby illustrates our construction. 
Then we will put \cref{thm:bipartite} in the broader context of decompositions of nonnegative matrices, and derive some implications for the ranks (\cref{ssec:context}). 
Finally we will derive further implications for quantum channels and the entanglement of purification (\cref{ssec:otherimpl}).

\subsection{Proof of \cref{thm:bipartite}}
\label{ssec:proofthm1}

In this section we prove \cref{thm:bipartite}.  
The key missing element is \cref{lem:bipartiteHerm}, which tells us that in the bipartite case we can force the local matrices to be Hermititian without increasing the number of terms.

\begin{lemma}[Hermiticity in the bipartite case]\label{lem:bipartiteHerm} 
Let $P_1,\ldots, P_r\in \mc{M}_s$ and   
$Q_1,\ldots, Q_r \in \mc{M}_t$ be two sets of linearly independent matrices such that 
$$
\rho = \sum_{\alpha=1}^r P_\alpha \otimes Q_\alpha
$$ 
is Hermitian. Then $\rho$  can be expressed as 
$$
\rho = \sum_{\alpha=1}^r  A_\alpha \otimes B_\alpha 
$$
where $A_\alpha\in \mathrm{Her}_s$ and  $B_\alpha\in \mathrm{Her}_t$.  
\end{lemma}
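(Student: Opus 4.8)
The plan is to use the Hermiticity $\rho=\rho^\dagger$ to show that the two coefficient spaces $V:=\mathrm{span}_{\mathbb C}\{P_1,\dots,P_r\}\subseteq\mc{M}_s$ and $W:=\mathrm{span}_{\mathbb C}\{Q_1,\dots,Q_r\}\subseteq\mc{M}_t$ are each invariant under taking adjoints, and then to rewrite $\rho$ over Hermitian bases of these spaces. Since $\{Q_\alpha\}$ is linearly independent, the minimal subspace $U\subseteq\mc{M}_s$ with $\rho\in U\otimes\mc{M}_t$ is exactly $V$: contracting $\rho$ against functionals $\phi\in\mc{M}_t^*$ gives $(\mathrm{id}\otimes\phi)(\rho)=\sum_\alpha\phi(Q_\alpha)\,P_\alpha$, and as $\phi$ varies these span $V$. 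This characterizes $V$ intrinsically from $\rho$, and symmetrically $W$.

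The main step, and the one I expect to require the most care, is showing $V^\dagger=V$ (and $W^\dagger=W$). Applying the adjoint to the given decomposition yields $\rho^\dagger=\sum_\alpha P_\alpha^\dagger\otimes Q_\alpha^\dagger$. Because the adjoint is conjugate-linear, $\{P_\alpha^\dagger\}$ and $\{Q_\alpha^\dagger\}$ are again linearly independent, so this too is a minimal decomposition, now with left support $\mathrm{span}_{\mathbb C}\{P_\alpha^\dagger\}=V^\dagger$ (here $M=\sum_\alpha\lambda_\alpha P_\alpha$ gives $M^\dagger=\sum_\alpha\overline{\lambda_\alpha}\,P_\alpha^\dagger$, so $V^\dagger$ is indeed this span). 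But $\rho^\dagger=\rho$ by hypothesis, and the minimal left support is an invariant of the tensor, so $V^\dagger=V$; the same argument on the right gives $W^\dagger=W$.

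Once adjoint-invariance is in hand the rest is routine linear bookkeeping. A complex subspace with $V^\dagger=V$ decomposes as $V=V_h\oplus i\,V_h$ where $V_h:=V\cap\Her_s$, since any $M\in V$ splits as $M=\tfrac12(M+M^\dagger)+i\cdot\tfrac1{2i}(M-M^\dagger)$ with both summands Hermitian and in $V$. Hence $\dim_{\mathbb C}V=\dim_{\mathbb R}V_h=r$, so $V$ has a complex basis $A_1,\dots,A_r$ of Hermitian matrices, and likewise $W$ has a Hermitian basis $B_1,\dots,B_r$. Writing $\rho=\sum_{\alpha,\beta}c_{\alpha\beta}\,A_\alpha\otimes B_\beta$ and using that each $A_\alpha\otimes B_\beta$ is Hermitian, the identity $\rho=\rho^\dagger$ forces $c_{\alpha\beta}=\overline{c_{\alpha\beta}}$, i.e.\ the coefficient matrix is real.

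Finally I would absorb the real coefficients into one factor: setting $B'_\alpha:=\sum_\beta c_{\alpha\beta}B_\beta$, each $B'_\alpha$ is a real combination of Hermitian matrices, hence Hermitian, and $\rho=\sum_{\alpha=1}^r A_\alpha\otimes B'_\alpha$ is the desired decomposition with $r$ Hermitian terms on each side. The number of terms is unchanged, which is precisely the content of the lemma. The only genuine subtlety is the uniqueness and resulting adjoint-invariance of the supports; every step after that is elementary.
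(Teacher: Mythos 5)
Your proof is correct, and it reaches the lemma by a genuinely different route than the paper. The paper's proof is a two-step counting argument: (i) some Hermitian decomposition (with possibly more terms) exists because $\Her_s\otimes\Her_t=\Her_{st}$ as real vector spaces, both having real dimension $s^2t^2$; and (ii) in a minimal Hermitian decomposition the factor sets are $\mathbb{R}$-linearly independent, hence $\mathbb{C}$-linearly independent (add and subtract the adjoint of a vanishing linear combination), so the number of Hermitian terms must equal the operator Schmidt rank $r$. You never invoke the identity $\Her_s\otimes\Her_t=\Her_{st}$ nor minimize over Hermitian decompositions; instead you characterize the supports $V=\mathrm{span}_{\mathbb{C}}\{P_\alpha\}$ and $W=\mathrm{span}_{\mathbb{C}}\{Q_\alpha\}$ intrinsically via contractions, deduce $V^\dagger=V$ and $W^\dagger=W$ from $\rho=\rho^\dagger$, pick Hermitian $\mathbb{C}$-bases of $V$ and $W$, and show the resulting coefficient matrix is real before absorbing it into one side. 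The one technical ingredient the two proofs share is the real-to-complex independence promotion: your verification that a real basis of $V\cap\Her_s$ is a $\mathbb{C}$-basis of $V$ is exactly the paper's step (ii). What your route buys is a refinement the paper's proof leaves implicit: the Hermitian factors can be chosen \emph{inside the original spans} $V$ and $W$. That refinement is not idle, since it matches the hypothesis $\tilde P_j\in V$ appearing in \cref{thm:main}, and it is the very device the paper deploys later in the proof of \cref{prop:multi}, where a Hermitian basis of ${\rm span}\{P_1^{[1]},\ldots,P_k^{[1]}\}$ is chosen so that the site-one matrices are common to all $\rho_i$; your argument makes that step self-contained. What the paper's route buys in exchange is brevity: it dispenses with the support analysis entirely.
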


\begin{proof} 
First note that a Hermitian decomposition $\rho=\sum_{\alpha=1}^{r'}  A_\alpha \otimes B_\alpha $ (with each $A_\alpha$ and $B_\alpha$ Hermitian, and possibly $r'\geq r$) can always be found, since $\Her_s \otimes \Her_t = \Her_{st}$. This holds simply because $\Her_s \otimes \Her_t \subseteq \Her_{st}$ and the two vector spaces have the same dimension, namely $s^2t^2$.

To see that $r'=r$, assume that  $\rho=\sum_{\alpha=1}^{r'}  A_\alpha \otimes B_\alpha $ is a minimal Hermitian decomposition, i.e.\ where $\{A_\alpha\}_\alpha$ are linearly independent over $\mathbb{R}$, and so are $\{B_\alpha\}_\alpha$. But this implies that they are also linearly independent over  $\mathbb{C}$, since consider $\lambda_\alpha\in \mathbb{C}$ and the equation 
$$
\sum_{\alpha} \lambda_\alpha A_\alpha = 0 .
$$
Summing and subtracting its complex conjugate transpose of this equation, 
we see that the real and imaginary part of each $\lambda_\alpha$ needs to vanish. 
Thus  $r=r'$.
\end{proof}

Now we are ready to prove  \cref{thm:bipartite}. Let us first state the theorem again.

\begin{theorem-nn}[\cref{thm:bipartite}]
Let $\rho$ be a bipartite state 
$0\leqslant \rho\in \mc{M}_{d_1}\otimes \mc{M}_{d_2} $ 
where $d_1,d_2$ are arbitrary, such that 
$$
\rho = 
\sum_{\alpha=1}^2 
A_{\alpha} \otimes B_\alpha 
$$
where $A_\alpha\in \mc{M}_{d_1}$ and $B_\alpha\in \mc{M}_{d_2}$. 
Then $\rho$ is separable and can be written as 
$$
\rho = 
\sum_{\alpha=1}^2 
\sigma_{\alpha}\otimes \tau_\alpha
$$
where  $\sigma_\alpha \in \PSD_{d_1}$ and 
$\tau_\alpha \in \PSD_{d_2}$. 
\end{theorem-nn}

\begin{proof}
By \cref{lem:bipartiteHerm}, $\rho$ can be written with Hermitian local matrices, namely as 
$$
\rho =\sum_{\alpha=1}^{2}P_\alpha\otimes Q_\alpha 
$$
where $P_\alpha\in \Her_{d_1}$ and $Q_\alpha\in \Her_{d_2}$. 
By \cref{cor:main} \ref{cor:3}, there is a separable decomposition with only two terms,  
\begin{equation}\tag*{\qedhere}\rho=  \sum_{\alpha=1}^2 \sigma_\alpha \otimes \tau_\alpha. 
\end{equation}
\end{proof}

\subsection{Method to obtain the separable decomposition}
\label{ssec:method}

We now provide an explicit procedure to obtain the separable decomposition of a bipartite state with operator Schmidt rank two. 
We remark that another such procedure was provided in \cite[Theorem 3.44]{Ca17b}.
The procedure follows the proof of \cref{thm:bipartite}, as the latter is constructive. 
In the following we focus on presenting the method, rather than on justifying why it works. 
At the end of this section, we will apply this method to an example. 

First, the input and output to this method are as follows:
\begin{description}
\item[Input] A decomposition 
 of a bipartite positive semidefinite matrix 
$$
\rho= \sum_{\alpha=1}^2 P_\alpha\otimes Q_\alpha  \geqslant 0 . 
$$ 
where the $P_\alpha$'s and the $Q_\alpha$'s are linearly independent.\footnote{This could be, for example, the operator Schmidt decomposition of $\rho$.}
\item[Output] $\sigma_\alpha \geqslant 0$ and $\tau_\alpha \geqslant 0$ such that 
\be
\rho = \sum_{\alpha=1}^2\sigma_\alpha\otimes \tau_\alpha .
\label{eq:rhosepdecomp}
\ee 
\end{description}

The method works as follows:
\begin{description}
\item[Step 1] \label{im:step1} Making the matrices Hermitian. 

\begin{enumerate}[1.]
\im Express $P_\alpha = P_\alpha^0 + i P_\alpha^1$, where $P_\alpha^{0}$ and  $i P_\alpha^1$ are the Hermitian and antihermitian part of $P_\alpha$, respectively\footnote{Namely $P_\alpha^0 = (P_\alpha+P_\alpha^\dagger)/2$ and $P_\alpha^1 = -i(P_\alpha -P_\alpha^\dagger)/2$.}, and similarly for  $Q_\alpha$.

\im  Find two matrices of the set $\{P_1^0, P_1^1, P_2^0, P_2^1\}$ which are $\R$-linearly independent, i.e.\ that are not a multiple of each other. 

\im Express the other two matrices as a real linear combination of the two linearly independent matrices. Substitute this in $\rho$ and obtain an expression of the form 
\be
\rho= A\otimes C + B\otimes D \geqslant 0 
\label{eq:rhostep1}
\ee
where each matrix is Hermitian. 
\end{enumerate}

\item[Step 2] Finding the extreme rays $v,u\in\R^2$: 
\be
\textrm{S}(A,B) &=& \{(x,y)\in \R^2 | x A+y B\geqslant 0\}  \nn\\
&=& \textrm{convex-cone}\{v , u \}\subseteq\R^2.\nn
\ee

Denote the $i$th diagonal element of $C$, $D$ by $C_{ii}$, $D_{ii}$, respectively. Note that $C_{ii}A +D_{ii}B\geqslant 0$ by construction. There are two cases: 

\begin{enumerate}
\item[Case 1:] There is an $i$ such that $C_{ii}A + D_{ii}B> 0.$ Since this is positive definite, there is an invertible matrix $P$ such that $P^\dagger (C_{ii}A +D_{ii}B) P = I$. Set $$\tilde A=P^\dagger AP,\quad  \tilde B=P^\dagger BP$$ and let $\lambda_{\min}, \lambda_{\max}$ be the smallest and largest eigenvalue of $$C_{ii}\tilde B- D_{ii}\tilde A.$$  The extreme rays are 
$$
u = -(\lambda_{\min}C_{ii}+D_{ii},\lambda_{\min}D_{ii}-C_{ii}),$$
$$
v =  (\lambda_{\max}C_{ii}+D_{ii},\lambda_{\max}D_{ii}-C_{ii}).$$

\item[Case 2:] 
For all $i$, $C_{ii}A + D_{ii}B\ngtr 0$.  Then $(C_{ii}, D_{ii})$ is an extremal ray for any $i$. 
If there are two $i$'s whose corresponding vectors $(C_{ii}, D_{ii})$ are linearly independent, then we have  already found the two extremal rays.
If all $(C_{ii}, D_{ii})$  are linearly dependent, then we only have one extremal ray. In this case,
consider $$(C'_{ii}, D'_{ii}) = (C_{ii}, D_{ii}) \pm \epsilon (-D_{ii},C_{ii}).$$ For small $\epsilon$, the new point will be in the interior of $S(A,B)$. 
The kernel of $C'_{ii}A + D'_{ii}B$ is then precisely the joint kernel of $A$ and $B$. After splitting it off, the new matrices $A',B'$ will define the same spectrahedral cone and  $C'_{ii}A' + D'_{ii}B'>0$. Hence we are in Case 1.

\end{enumerate}

Note that it is very unlikely that Case 2 of Step 2 happens, as the set of all $(C,D)\in\mathbb R^2$  satisfying  $C\cdot A + D\cdot B\geqslant0$ but $C\cdot A + D\cdot B\ngtr 0$ is a set of measure zero. 
In most cases there will be an $i$ which provides an interior point. 

\item[Step 3] Finding the positive semidefinite matrices $H_1, H_2$ such that  $A,B$
\be
C = u_1 H_1 + v_1 H_2, \nn\\
D = u_2 H_1 + v_2 H_2,\nn
\ee
where  $u=(u_1,u_2)$ and $v=(v_1,v_2)$. 
Since $u,v$ are linearly independent, this system of equations has a unique solution. 

Substituting in \eqref{eq:rhostep1} we obtain that the positive semidefinite matrices of \eqref{eq:rhosepdecomp} are
\be
&&\sigma_1 = u_1A+u_2B, \qquad \qquad\sigma_2 = v_1A+v_2B\nn\\
&&\tau_1= H_1, \quad 
\tau_2 = H_2. \nn
\ee

\end{description}

We now illustrate this procedure with a simple example, namely a two-qubit state. In this case the negativity of the partial transpose is a sufficient criterion for separability. 
We use the standard conventions for qubits, 
where $I = |0\ra\la 0| + |1\ra\la 1|$, 
 $\sigma_z = |0\ra\la 0 | -|1\ra\la 1|$, and 
 $\sigma_x = |0\ra\la 1| + |0\ra\la 1|$.

\begin{example}[Two qubits correlated in $\sigma_x$ basis]\label{ex:xx}
Consider the two-qubit state 
$$
\rho = \frac{1}{2}(I\otimes I + \sigma_x \otimes \sigma_x).
$$
This has $\osr(\rho)=2$ and is thus separable. In this case the separable decomposition can be found by inspection, namely
\be
\rho = \frac{1}{2}(|+,+\ra \la +,+| +|-,-\ra \la -,-|  ),
\label{eq:rhosep+}
\ee 
where $|\pm \ra = (|0\ra\pm|1\ra)/\sqrt{2}$,  which indeed has $\seprank(\rho)=2$. 

We now apply the method above to this state to obain the separable decomposition. 
From now on we ignore normalization, i.e.\ we consider $2\rho$.  
\begin{description}
\im[Step 1] The matrices are already Hermitian. 
\im[Step 2] The spectrahedron defined by $I,\sigma_x$ is
$$
S(I, \sigma_x) = \{(a,b) | aI + b\sigma_x \geqslant 0\}
$$
The extreme rays are $u = (1,-1)$ and $v=(1,1)$. 
\im[Step 3] The positive semidefinite matrices $H_1,H_2$ that satisfy 
\be
I =  H_1 + H_2, \quad  
\sigma_x =  - H_1 +H_2 \nn
\ee
are $H_1 = |-\ra\la -|$ and  $H_2 = |+\ra\la +|$. Thus 
\be
&&\sigma_1 = |-\ra\la -| , \quad
\sigma_2= |+\ra\la +|, \nn \\
&&\tau_1 = I-\sigma_x , \quad
\tau_2 = I+\sigma_x \nn
\ee
which is indeed the decomposition of \eqref{eq:rhosep+}. 
\end{description}
\demo\end{example}

\subsection{Context and implications of  \cref{thm:bipartite}}
\label{ssec:context}

In this section we put \cref{thm:bipartite} in the context of Ref.\ \cite{De19}, 
which presents 
a relation between decompositions of quantum states and decompositions of nonnegative matrices. 
Namely, Ref.\ \cite{De19} and \cite{De19d} show that for bipartite positive semidefinite matrices which are diagonal in the computational basis, 
\be
\label{eq:rhoclass}
\rho =\sum_{i,j} M_{i,j} |i,j\ra\la i,j|, 
\ee 
certain decompositions of $\rho$ correspond to  decompositions of the nonnegative matrix\footnote{That is, entrywise nonnegative. A Hermitian matrix with nonnegative eigenvalues is called positive semidefinite.} 
\be
M = \sum_{i,j} M_{i,j} |i\ra \la j| .
\label{eq:M}
\ee 
In particular, the operator Schmidt decomposition, 
the separable decomposition, 
and the local purification form of $\rho$
correspond to 
the singular value decomposition,
the nonnegative factorisation \cite{Ya91}, 
and the positive semidefinite factorisation \cite{Fi11}
of $M$, respectively. 
(There are also correspondences between the translationally invariant (t.i.) versions, 
 by which the 
 t.i.\ separable decomposition and the  t.i. local purification form of $\rho$
  correspond to 
the completely positive factorisation  \cite{Be15b}
and the completely positive semidefinite transposed factorisation of $M$ \cite{La15c}, respectively).
Moreover, each factorisation has an associated rank, and these correspondences imply that 
the operator Schmidt rank of $\rho$ equals the rank of $M$, 
the separable rank of $\rho$ equals the nonnegative rank of $M$, 
and 
the purification rank of $\rho$ equals the positive semidefinite rank of $M$. 
In a formula, if $\rho$ is of the form \eqref{eq:rhoclass} and $M$ of the form \eqref{eq:M}, then 
\be
\osr(\rho)&=&\rank(M) \nn\\
\seprank(\rho) &=&\rank_+(M) \label{eq:rel}\\
\purirank(\rho) &=&\psdrank(M) \nn
\ee

Now, it is known that: 

\begin{proposition}[\cite{Fa14}] \label{pro:rankclass}
 Let $M$ be a nonnegative matrix. 
Then $\rank(M) =2 \implies \rank_+(M)=
 \psdrank(M) =2.$
\end{proposition}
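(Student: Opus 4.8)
The plan is to prove the two equalities by squeezing both $\psdrank(M)$ and $\rank_+(M)$ between $2$ and $2$. For the lower bounds, I would recall the elementary facts that $\rank(M)\le\rank_+(M)$ (a nonnegative factorisation is in particular a factorisation) and that $\psdrank(M)=1$ implies $\rank(M)\le 1$ (a size-$1$ psd factorisation is a scalar factorisation $M_{ij}=e_if_j$ with $e_i,f_j\ge 0$). Since $\rank(M)=2$, these already give $\rank_+(M)\ge 2$ and $\psdrank(M)\ge 2$. For the upper bound I would invoke the standard inequality $\psdrank(M)\le\rank_+(M)$, obtained by turning a nonnegative factorisation $M=AB$ into a psd factorisation whose matrices are the diagonal matrices built from the rows of $A$ and the columns of $B$, so that $M_{ij}=\trace(\mathrm{diag}(A_{i\cdot})\,\mathrm{diag}(B_{\cdot j}))$. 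With this inequality in hand, the whole proposition reduces to the single claim $\rank_+(M)\le 2$, since then $2\le\psdrank(M)\le\rank_+(M)\le 2$ forces both to equal $2$.

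The heart of the argument is therefore to show that a nonnegative matrix of rank $2$ has nonnegative rank at most $2$, and here I would use a geometric cone argument. Let $W\subseteq\R^m$ be the column space of $M$, a two-dimensional subspace, and consider the cone $C:=W\cap\R^m_{\ge 0}$ of nonnegative vectors lying in $W$. Being contained in the pointed nonnegative orthant, $C$ is pointed, and since the columns of $M$ lie in $C$ and span $W$, the cone $C$ is genuinely two-dimensional. A pointed two-dimensional closed convex cone is generated by exactly two extreme rays $r_1,r_2$, which, being elements of $C$, are themselves nonnegative vectors. Every column of $M$ then lies in $\mathrm{cone}(r_1,r_2)$ and hence is a nonnegative combination of $r_1$ and $r_2$. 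Collecting $r_1,r_2$ as the columns of an $m\times 2$ nonnegative matrix $A$ and the combination coefficients as a $2\times n$ nonnegative matrix $B$ yields $M=AB$ with both factors entrywise nonnegative, i.e.\ $\rank_+(M)\le 2$.

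The main obstacle is not any single hard computation but rather handling the planar geometry cleanly and excluding degenerate configurations: one must verify that $C$ is exactly two-dimensional, which uses that the columns span $W$ and so cannot all lie on a single ray (otherwise $M$ would have rank one), and that a pointed planar cone always has precisely two extreme rays. I would make this dimension count explicit. A secondary point worth recording carefully is the pair of routine inputs the sandwich depends on, namely $\psdrank(M)\le\rank_+(M)$ and the characterisation $\psdrank(M)=1\iff\rank(M)\le 1$; beyond excluding the value $1$, no finer lower bound on $\psdrank$ is required.
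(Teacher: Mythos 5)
Your proof is correct, but note that there is nothing in the paper to compare it against: the paper does not prove \cref{pro:rankclass} at all, it simply quotes it from \cite{Fa14}. What you have written is therefore a self-contained replacement for that citation, and it is sound. The reductions are all standard and correctly handled: $\rank(M)\leq\rank_+(M)$ and the observation that a size-one psd factorisation $M_{ij}=e_if_j$ forces $\rank(M)\leq 1$ give the two lower bounds; the diagonal construction $M_{ij}=\trace(\mathrm{diag}(A_{i\cdot})\,\mathrm{diag}(B_{\cdot j}))$ gives $\psdrank(M)\leq\rank_+(M)$; and the planar-cone argument gives the crucial $\rank_+(M)\leq 2$. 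In that last step the points needing care are exactly the ones you flag, and they check out: $C=W\cap\R^m_{\geqslant 0}$ is closed and pointed (it sits inside the orthant), it spans $W$ because the nonnegative columns of $M$ lie in $C$ and span $W$, and a pointed closed convex cone spanning a plane is a sector of opening angle less than $\pi$, hence the conic hull of its two extreme rays $r_1,r_2$, which are themselves nonnegative vectors since they lie in $C$. It is worth observing that your argument is precisely the commutative shadow of the paper's own route to \cref{thm:bipartite}: there, the relevant cone is $C={\rm FS}_1(P_1,P_2)\subseteq\R^2$, which in the generic case is a simplex cone (\cref{cor:main} \ref{cor:3}), i.e.\ generated by two extreme rays, and the separable decomposition is assembled from those rays exactly as your $r_1,r_2$ generate the nonnegative factorisation $M=AB$. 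This matches the paper's remark in \cref{ssec:context} that \cref{thm:bipartite} generalises \cref{pro:rankclass}; your proof makes that specialisation explicit, whereas the paper's citation to \cite{Fa14} buys brevity at the cost of self-containedness.
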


\cref{thm:bipartite} can be seen as a generalization of \cref{pro:rankclass} to the case that $\rho$ is not diagonal in the computational basis. 
And, as we will see in \cref{ssec:implranks}, \cref{thm:mainmpo} can be seen as a generalization of \cref{pro:rankclass}  to the multipartite case. 
The precise implications of \cref{thm:bipartite} for the ranks of  bipartite positive semidefinite matrices are the following. 

\begin{corollary}[of \cref{thm:bipartite}]\label{cor:ranks1}
Let $\rho$ be a bipartite positive semidefinite matrix,  
$0\leqslant \rho\in \mc{M}_{d_1}\otimes \mc{M}_{d_2} $
where $d_1,d_2$ are arbitrary. 
Then the following are equivalent:
\begin{enumerate}
\item[(i)] $\osr(\rho)=2.$
\item[(ii)] $\hosr(\rho)=2.$
\item[(iii)] $\seprank(\rho)=2.$
\end{enumerate}
Moreover, they imply that:
\begin{enumerate}
\item[(iv)] $\purirank(\rho)=2$ and $\rho$ is separable. 
\end{enumerate}
\end{corollary}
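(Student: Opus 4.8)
The plan is to prove \cref{cor:ranks1} by combining \cref{thm:bipartite} with the general rank relations of \cref{thm:basic} and the Hermitianization lemma \cref{lem:bipartiteHerm}. The key observation is that the three conditions (i), (ii), (iii) are each lower bounds on the next after accounting for separability, so I would aim to close a cycle of implications rather than prove each equivalence in isolation. First I would dispose of the trivial rank-one cases: by \cref{thm:basic}(i) the conditions $\osr(\rho)=1$, $\purirank(\rho)=1$ and $\seprank(\rho)=1$ are all equivalent, so throughout I may assume we are genuinely in the rank-two (nontrivial) regime, i.e.\ $\rho$ is not a product state.

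The core implication is (i)$\Rightarrow$(iii)(iv). Assuming $\osr(\rho)=2$, \cref{thm:bipartite} gives a separable decomposition $\rho=\sum_{\alpha=1}^2 \sigma_\alpha\otimes\tau_\alpha$ with two psd terms, so $\seprank(\rho)\leq 2$. Since $\rho$ is not a product state, $\seprank(\rho)\neq 1$ by \cref{thm:basic}(i), hence $\seprank(\rho)=2$, giving (iii). For the purification rank, \cref{thm:basic}(iii) yields $\purirank(\rho)\leq\seprank(\rho)=2$, and again $\purirank(\rho)\neq 1$, so $\purirank(\rho)=2$; together with separability this is exactly (iv).

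To close the equivalences I would then argue (iii)$\Rightarrow$(i) and (ii)$\Leftrightarrow$(i). For (iii)$\Rightarrow$(i): if $\seprank(\rho)=2$ then $\rho$ is separable with two terms, and \cref{thm:basic}(iii) gives $\osr(\rho)\leq\seprank(\rho)=2$; since $\rho$ is not a product state, $\osr(\rho)=2$. For the Hermitian rank, the inequality $\osr(\rho)\leq\hosr(\rho)$ is recorded in the introduction, so (ii)$\Rightarrow$(i) is immediate once we exclude the rank-one case (an $\hosr$-two state cannot be a product state, since product states have $\hosr=1$). Conversely, (i)$\Rightarrow$(ii) follows from \cref{lem:bipartiteHerm}: given $\osr(\rho)=2$ with linearly independent local factors, the lemma produces a two-term decomposition into Hermitian local matrices, so $\hosr(\rho)\leq 2$, and equality holds because $\rho$ is not a product state.

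I do not anticipate a serious obstacle here, as this corollary is essentially a bookkeeping consequence of the already-established \cref{thm:bipartite}, \cref{thm:basic} and \cref{lem:bipartiteHerm}. The one point requiring mild care is the systematic exclusion of the rank-one (product) case at each step, so that every ``$\leq 2$'' upgrades to ``$=2$''; the cleanest route is to fix at the outset that $\rho$ is not a product state and invoke \cref{thm:basic}(i) uniformly to rule out the value $1$ for each rank.
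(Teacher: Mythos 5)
Your proof follows the same route as the paper's: the same cycle of implications, powered by \cref{thm:bipartite}, \cref{thm:basic} and \cref{lem:bipartiteHerm}. All steps check out except one, which you assert rather than prove: in (ii)$\Rightarrow$(i) you exclude the rank-one case by saying that ``an $\hosr$-two state cannot be a product state, since product states have $\hosr=1$.'' The statement you actually need is: if $\rho\geqslant 0$ and $\osr(\rho)=1$, then $\hosr(\rho)=1$. This does not follow from \cref{thm:basic}(i), which never mentions $\hosr$, so your opening reduction---handling all rank-one exclusions uniformly via \cref{thm:basic}(i)---does not cover this case; and it is not immediate on its own, because $\osr(\rho)=1$ only gives $\rho=A\otimes B$ with $A,B$ arbitrary complex (possibly non-Hermitian) matrices, not a product of Hermitian or psd factors. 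The paper supplies exactly this missing argument: since $\rho=A\otimes B\geqslant 0$, the factors $A$ and $B$ must both be positive semidefinite or both negative semidefinite; in either case they are Hermitian, so $\hosr(\rho)=1$, contradicting $\hosr(\rho)=2$. With that one-line insertion your proof is complete and coincides with the paper's.
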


\begin{proof}
(i) implies (ii): is shown in \cref{lem:bipartiteHerm}.

(ii) implies (i):  since  $\osr(\rho) \leq \hosr(\rho)$ we only have to see that $\osr(\rho)=1$ cannot hold. If $\osr(\rho)=1$ then $\rho = A\otimes B$ for some matrices $A,B$, but since $\rho\geqslant 0$, $A$ and $B$ must both be positive semidefinite or both negative semidefinite. In either case they are both Hermitian, which shows that $\hosr(\rho)=1$, and contradicts the assumption $\hosr(\rho)=2$. 

(i) implies (iii): is shown in \cref{thm:bipartite}. 

(iii) implies (i): From \cref{thm:basic}  (iii) we have that $\osr(\rho)\leq \seprank(\rho)$. 
But $\osr(\rho)=1$ cannot hold, since this would imply that $\seprank(\rho)=1$ by \cref{thm:basic} (i), which contradicts the assumption that $\seprank(\rho)=2$.

(iii) implies (iv): From \cref{thm:basic} (iii) we have that $\purirank(\rho)\leq \seprank(\rho)$, thus we only have to see that $\purirank(\rho)=1$ cannot be true. By \cref{thm:basic} (i) this implies that  $\seprank(\rho)=1$, which contradicts the assumption that $\seprank(\rho)=2$.
\end{proof}

This leaves the following situation for bipartite positive semidefinite matrices. 
Note that we only need to analyse $\osr$, since $\osr(\rho) = \hosr(\rho)$ by \cref{lem:bipartiteHerm}. 
\begin{itemize}
\item If $\osr(\rho)=1$, then $\rho$ is a product state, and  $\hosr(\rho)=\seprank(\rho) = \purirank(\rho)=1$ by \cref{thm:basic} (i). 

\item  If $\osr(\rho)=2$ then  $\rho$  is a separable state, and $\hosr(\rho)=\seprank(\rho) = \purirank(\rho)=2$ by \cref{cor:ranks1}. 

\item Ref.\ \cite{Go12} provides a family of nonnegative matrices $\{ M_t \in \mathbb{R}^{t\times t}_{\geq 0} \}_{t\geq 3}$, for which $\rank(M_t) = 3$ for all $t$ but $\psdrank(M_t)$ diverges with $t$. 
Defining the bipartite state 
$$
\rho_t = \sum_{i,j=1}^t (M_t)_{i,j}|i,j\rangle\langle i,j| ,
$$ 
 and using relations \eqref{eq:rel}, this immediately implies that there is a family  $\{\rho_t\}$  for which $\osr(\rho_t)=3$ for all $t$ but $\purirank(\rho_t)$ diverges. By \cref{thm:basic} (iii), this implies that $\seprank(\rho_t)$ also diverges with $t$.

\end{itemize}

This is summarised in \cref{tab:overviewbipartite}, and this situation is to be compared with the multipartite case, summarised in \cref{tab:overview}. 

\begin{widetext}
\begin{center}
\begin{table}[thb]
\begin{tabular}{c|l}
$\osr(\rho)=\hosr(\rho)$ &  Bipartite state $\rho$\\ \hline 
1 & Product state  \\ 
2 &  Separable state, and separable rank and purification rank are both 2\\ 
3 &  If separable, purification rank 
can be unbounded, \\
& and thus separable rank can be unbounded too
\end{tabular}
\caption{Overview of the properties of bipartite positive semidefinite matrices $\rho$ with small operator Schmidt rank ($\osr$). Recall that the operator Schmidt rank equals the hermitian operator Schmidt rank ($\hosr$) by \cref{lem:bipartiteHerm}. }
\label{tab:overviewbipartite}
\end{table}
\end{center}
\end{widetext}


\subsection{Further implications of \cref{thm:bipartite}}
\label{ssec:otherimpl}

We now derive some further implications of   \cref{thm:bipartite}, 
first concerning quantum channels, and then the entanglement of purification. 

First we focus on quantum channels, which are completely positive and trace preserving maps. 
Recall that 
every completely positive map $\mc{E}:\mc{M}_{d_2}\to \mc{M}_{d_1}$ is dual to a 
positive semidefinite matrix $0\leqslant\rho \in \mc{M}_{d_1}\otimes \mc{M}_{d_2}$ (sometimes called the Choi matrix of $\mc{E}$)  via the following relation \cite{Ja72}
\be
\rho = \frac{1}{\sqrt{d_1d_2}}(\mc{E}\otimes \textrm{id}) (\sum_{j,k=1}^{d_2}|j,j\ra\la k,k|)
\label{eq:cj}
\ee
where $\textrm{id}$ is the identity map. Moreover, $\mc{E}$ is trace preserving if and only if the Choi matrix satisfies that $\mathrm{tr_1}(\rho) = I/d_2$,  where $\mathrm{tr_1}$ indicates the trace over the first subsystem, and $I$ is the identity matrix of size $d_2$. 
A quantum channel $\mc{E}:\mc{M}_{d_2}\to \mc{M}_{d_1}$ is called \emph{entanglement breaking} if  $(\mc{E}\otimes I)(\sigma)$ is separable for all $0\leqslant \sigma \in \mc{M}_{d_1}\otimes \mc{M}_{d_2}$ \cite{Ho03}.
It is shown in Ref.\  \cite{Ho03} that $\mc{E}$ is entanglement breaking if and only if its Choi matrix $\rho$ is separable. 
Note that determining whether a quantum channel is entanglement breaking is NP-hard \cite{Gh10}.

\begin{corollary}[of \cref{thm:bipartite}] \label{cor:channels}
Let $d_1,d_2\geq 2$ be arbitrary and $\mc{E}: \mc{M}_{d_2} \to \mc{M}_{d_1}$ be a quantum channel such that the rank of $\mc{E}$ (i.e.\ the dimension of the image) is 2. Then $\mc{E}$ is entanglement breaking. 
\end{corollary}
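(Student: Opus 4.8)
The plan is to reduce the statement to \cref{thm:bipartite} through the Choi--Jamio{\l}kowski correspondence recalled in Eq.~\eqref{eq:cj}. First I would observe that, since $\mc{E}$ is a quantum channel, it is in particular completely positive, so its Choi matrix $\rho$ is positive semidefinite, $0\leqslant\rho\in\mc{M}_{d_1}\otimes\mc{M}_{d_2}$. The argument then has three moving parts: (a) show that $\rho$ has operator Schmidt rank two; (b) invoke \cref{thm:bipartite} to conclude that $\rho$ is separable; and (c) use the equivalence of \cite{Ho03} between separability of the Choi matrix and the entanglement-breaking property of $\mc{E}$.

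The key step is (a), namely identifying $\osr(\rho)$ with the rank of $\mc{E}$, i.e.\ the dimension of its image. I would do this by unfolding Eq.~\eqref{eq:cj}, which gives
$$
\rho \propto \sum_{j,k} \mc{E}(|j\rangle\langle k|)\otimes |j\rangle\langle k|.
$$
Recall that the operator Schmidt rank of $\rho$ equals the rank of its realignment (reshuffling) matrix, which underlies the realignment criterion \cite{Ch03,Ru03c}. Reading off this realignment from the display above, its columns are indexed by $(j,k)$, and the $(j,k)$-th column is the vectorization of the coefficient block $\mc{E}(|j\rangle\langle k|)$. Hence its column space is the span of $\{\mc{E}(|j\rangle\langle k|)\}_{j,k}$. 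Since the matrices $|j\rangle\langle k|$ form a basis of $\mc{M}_{d_2}$, this span is precisely $\mathrm{image}(\mc{E})$, whose dimension is the rank of $\mc{E}$, assumed to be $2$. Therefore $\osr(\rho)=2$.

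With $\osr(\rho)=2$ in hand, \cref{thm:bipartite} applies directly and produces a separable decomposition of $\rho$; in particular $\rho$ is separable. Because $\mc{E}$ is entanglement breaking if and only if its Choi matrix $\rho$ is separable \cite{Ho03}, the conclusion follows, and the separable decomposition even certifies $\seprank(\rho)\leq 2$.

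I expect the only substantive point to be the identity $\osr(\rho)=\mathrm{rank}(\mc{E})$ of step (a), which is the linear-algebraic heart of the argument; once the index and normalization conventions of Eq.~\eqref{eq:cj} are fixed, the remaining steps are immediate citations. The subtlety to guard against is that $\mathrm{rank}(\mc{E})$ here means the dimension of the image of $\mc{E}$ as a linear map, and \emph{not} the Choi rank (the minimal number of Kraus operators, which equals the matrix rank of $\rho$); it is the former that coincides with the operator Schmidt rank. Any transpose or conjugation arising from the chosen realignment convention is harmless, since such invertible operations on $\mc{M}_{d_2}$ leave the dimension of the image unchanged.
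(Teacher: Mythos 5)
Your proposal is correct and follows essentially the same route as the paper: both reduce the statement via the Choi--Jamio{\l}kowski matrix of Eq.~\eqref{eq:cj}, show that this matrix has operator Schmidt rank two (equal to the rank of $\mc{E}$), apply the bipartite separability result, and conclude with the equivalence of \cite{Ho03}. The only cosmetic difference is how the rank identification is verified---the paper writes $\mc{E}(X)=\sum_{\alpha=1}^2 P_\alpha \tr(Q_\alpha^t X)$ with linearly independent $P_\alpha$ and $Q_\alpha$ and reads off the Choi matrix as $\sum_\alpha P_\alpha\otimes Q_\alpha$, whereas you compute the rank of the realignment---but this is the same linear-algebraic fact.
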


\begin{proof}
If the dimension of the image of $\mc{E}$ is 2, then  $\mc{E}$ can be written as 
$\mc{E}(X) = \sum_{\alpha=1}^2 P_\alpha \tr(Q_\alpha^t X)$, where $t$ denotes transposition, and $P_1,P_2$ are linearly independent, as are $Q_1,Q_2$.
From \eqref{eq:cj} it is immediate to see that the Choi matrix is 
$$
\rho = \frac{1}{\sqrt{d_1d_2}}\sum_{\alpha=1}^2 P_\alpha\otimes Q_\alpha, 
$$
i.e.\ it has operator Schmidt rank 2.
By \cref{cor:main} \ref{cor:2}, $\rho$ is separable, 
 and thus $\mc{E}$ is entanglement breaking.  
\end{proof}

Note that the statement of \cref{cor:channels} also holds if $\mc{E}$ is not trace preserving, since in that case the Choi matrix also has operator Schmidt rank 2.

Now we turn to the implications of \cref{thm:bipartite} for the correlations in the system. 
We consider again a bipartite state, which we now denote by $\rho_{AB}$. The entanglement of purification $E_p(\rho_{AB})$ \cite{Te02} is defined as 
$$
E_p(\rho_{AB}) = \min_{\psi}  \{E(|\psi\ra_{A,A',B,B'} ))| \tr_{A'B'} |\psi\ra\la \psi| = \rho\} 
$$
where the entropy of entanglement $E(|\psi\ra_{A,A',B,B'} )$ is defined as 
$$
E(|\psi\ra_{A,A',B,B'} ) = S_1(\rho_{AA'}),  
$$ 
where $\rho_{AA'} = \tr_{BB'} |\psi\ra_{AA'BB'}\la \psi| $ and 
$S_1$ is the von Neumann entropy, $S_1(\rho) =-\tr(\rho\log_d\rho)$, where $\rho$ has size $d\times d$.  
The entanglement of purification is a measure of the amount of classical and quantum correlations of the system \cite{Te02}. 
It follows from \cref{thm:bipartite} that if a state has operator Schmidt rank two, then its entanglement of purification is very small: 

\begin{corollary}[of \cref{thm:bipartite}] \label{cor:corr}
Let $0\leqslant \rho \in \mc{M}_{d_1}\otimes \mc{M}_{d_2}$ be such that $\osr(\rho)=2$. 
Then its entanglement of purification satisfies $E_p(\rho)\leq \log_{d_1}2$. 
\end{corollary}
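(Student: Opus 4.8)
The plan is to use the separable decomposition guaranteed by \cref{thm:bipartite} to construct an explicit purification, and then bound the entanglement of that purification from above; since $E_p$ is a minimum over all purifications, any single purification yields an upper bound. By \cref{thm:bipartite}, from $\osr(\rho)=2$ we obtain $\rho=\sigma_1\otimes\tau_1+\sigma_2\otimes\tau_2$ with all four matrices positive semidefinite. The key observation is that the separable rank being $2$ means $\rho$ is, up to normalization, a convex combination of just \emph{two} product states $\sigma_\alpha\otimes\tau_\alpha$. The entanglement of purification of such a state should be controllable because the $A$-$B$ correlations are carried by a classical index ranging over only two values.

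First I would write $\rho=\sum_{\alpha=1}^2 p_\alpha\, \hat\sigma_\alpha\otimes\hat\tau_\alpha$ where $\hat\sigma_\alpha,\hat\tau_\alpha$ are normalized density matrices and $p_\alpha\geq0$, $\sum_\alpha p_\alpha=1$. Then I would build a purification by purifying each factor separately and coupling them through a two-dimensional classical register: on the $A A'$ side, introduce an ancilla $A'$ together with a pointer system holding the index $\alpha$, and similarly on $BB'$. Concretely, a natural candidate is
$$
|\psi\rangle_{AA'BB'} = \sum_{\alpha=1}^2 \sqrt{p_\alpha}\, |\phi_\alpha\rangle_{AA'}\otimes|\chi_\alpha\rangle_{BB'}
$$
where $|\phi_\alpha\rangle_{AA'}$ purifies $\hat\sigma_\alpha$ and $|\chi_\alpha\rangle_{BB'}$ purifies $\hat\tau_\alpha$, with the purifying ancillas chosen so that $\{|\phi_\alpha\rangle\}$ and $\{|\chi_\alpha\rangle\}$ are mutually orthogonal across $\alpha$ (achievable by enlarging $A'$ and $B'$ to include a copy of the index). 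Tracing out $A'B'$ then returns $\rho$. The reduced state $\rho_{AA'}$ on the $A$-side, for this purification, is block-diagonal in the index $\alpha$ with blocks weighted by $p_\alpha$, so its von Neumann entropy splits as the entropy of the classical distribution $\{p_\alpha\}$ plus a convex combination of the entropies of the purified $A$-side blocks.

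The core estimate is then a dimension-counting bound on $S_1(\rho_{AA'})$. Since $S_1$ here is the $\log_{d_1}$-normalized von Neumann entropy and the entropy of any state is bounded by the logarithm of its dimension, the contribution of the classical register (which ranges over $2$ values) is at most $\log_{d_1}2$, and by arranging the purification so that the $A'$-blocks carry no further entropy—i.e.\ choosing the purifications $|\phi_\alpha\rangle$ to make the $AA'$ reduced blocks pure up to the index—the total entropy is bounded by $\log_{d_1}2$. This gives $E_p(\rho)\leq S_1(\rho_{AA'})\leq \log_{d_1}2$.

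The main obstacle I anticipate is the bookkeeping needed to ensure the constructed $|\psi\rangle$ is genuinely a purification \emph{and} that the entropy on the $A$-side is controlled by $\log_{d_1}2$ rather than by $d_1$ itself: one must be careful that the orthogonality of the index blocks is realized on the ancilla systems $A',B'$ and does not leak extra entanglement into the $A$-$B$ cut, and that the normalization conventions for $S_1$ (the base-$d_1$ logarithm) are tracked correctly so the final constant comes out as $\log_{d_1}2$ and not, say, $\log_{d_1}2$ times a dimension factor. The remaining steps—verifying positivity, the partial trace, and the entropy split—are routine once the purification is set up correctly.
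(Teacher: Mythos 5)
Your proof is correct, but it takes a genuinely different route from the paper's. The paper's proof is a two-line citation argument: it invokes \cref{cor:ranks1} to pass from $\osr(\rho)=2$ to $\purirank(\rho)=2$, and then quotes the inequality $E_p(\rho)\leq \log_{d_1}\left(\purirank(\rho)\right)$ from \cite{De19}. You instead start from the separable decomposition of \cref{thm:bipartite}, i.e.\ $\seprank(\rho)\leq 2$, and build an explicit purification: writing $\rho=\sum_{\alpha=1}^2 p_\alpha\,\hat\sigma_\alpha\otimes\hat\tau_\alpha$ and purifying each factor with ancillas that carry an orthonormal copy of the index $\alpha$, the cross terms vanish under $\tr_{A'B'}$ precisely because of those index registers, and the reduced state $\rho_{AA'}=\sum_{\alpha=1}^2 p_\alpha |\phi_\alpha\ra\la\phi_\alpha|$ is a mixture of two \emph{orthogonal pure} states, so its entropy is exactly the entropy of $\{p_\alpha\}$, which is at most $\log_{d_1}2$ in the paper's normalization. (One remark: the ``arranging'' you worry about in the last step is automatic---the $AA'$ blocks are the pure states $|\phi_\alpha\ra\la\phi_\alpha|$ by construction, so they contribute zero entropy without further choices.) In effect you reprove, in this special case, the general bound $E_p(\rho)\leq\log_{d_1}\left(\seprank(\rho)\right)$, which is essentially what the paper's citation conceals (via $\purirank(\rho)\leq\seprank(\rho)$). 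The paper's route buys brevity and routes through the formally tighter intermediary (the purification rank never exceeds the separable rank, though here both equal $2$); your route buys self-containedness---nothing from \cite{De19} is needed beyond the definition of $E_p$---and it makes transparent \emph{why} the bound holds: the only correlation in a two-term separable state is that of a single classical bit shared across the cut.
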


\begin{proof}
From \cref{cor:ranks1} we have that  $\osr(\rho)=2$ implies $\purirank(\rho)=2$. 
From \cite{De19} we know that $E_p(\rho)\leq \log_{d_1}(\purirank(\rho))$,  
 from which the result follows.
\end{proof}

\section{Multipartite states with Hermitian operator Schmidt rank two }
\label{sec:mpdo}

In this section we focus on multipartite states with Hermitian operator Schmidt rank two. 
First we prove \cref{thm:mainmpo} (\cref{ssec:thmmainmpo}), 
then analyse how to impose Hermiticity in the multipartite case (\cref{ssec:withc1}),
and finally   discuss the context and derive several implications of this result (\cref{ssec:implranks}).

\subsection{Proof of \cref{thm:mainmpo}}
\label{ssec:thmmainmpo}

In this section we prove  \cref{thm:mainmpo}. 
The main difference between the bipartite and the multipartite case (i.e.\ \cref{thm:bipartite} and \cref{thm:mainmpo}) 
is that in the bipartite case we need not assume that the matrices are Hermitian. 
This is because  \cref{lem:bipartiteHerm} allows to enforce the Hermiticity of the matrices without increasing the number of terms, i.e.\ it shows that  $\osr(\rho)=\hosr(\rho)$. 
Yet, this need not be true in the multipartite case, where we can only prove the looser relations between $\osr$ and $\hosr$ presented in \cref{ssec:withc1}. In particular, they do not guarantee that if $\osr(\rho)=2$ then $\hosr(\rho)=2$. 
Note that, ultimately, we need Hermitian matrices in order to use  \cref{cor:main} \ref{cor:3}. 

Let us now state and prove  \cref{thm:mainmpo}. 

\begin{theorem-nn}[\cref{thm:mainmpo}]
Let $\rho$ be  a positive semidefinite matrix, 
$
0\leqslant \rho\in \mc{M}_{d_1}\otimes \mc{M}_{d_2} \otimes \cdots \otimes\mc{M}_{d_n}
$ 
where $d_1,d_2,\ldots,d_n$ are arbitrary, such that
\be
\rho = \sum_{\alpha_1,\ldots, \alpha_{n-1}=1}^2 
A^{[1]}_{\alpha_1} \otimes A^{[2]}_{\alpha_1,\alpha_2}\otimes 
 \cdots 
 \otimes A^{[n]}_{\alpha_{n-1}}
\nn
\ee 
where $A^{[1]}_\alpha \in \Her_{d_1}$, $A^{[2]}_{\alpha,\beta} \in \Her_{d_2}$, \ldots, $A^{[n]}_{\alpha} \in \Her_{d_n}$.
Then $\rho$ is separable and can be written as 
$$
\rho = 
\sum_{\alpha_1,\ldots, \alpha_{n-1}=1}^2 
\sigma^{[1]}_{\alpha_1} \otimes \sigma^{[2]}_{\alpha_1,\alpha_2}\otimes 
\cdots \otimes \sigma^{[n]}_{\alpha_{n-1}}
$$
where  $\sigma^{[1]}_\alpha \in \PSD_{d_1}$, 
$\sigma^{[2]}_{\alpha,\beta} \in \PSD_{d_2}$, 
$\ldots$, $\sigma^{[n]}_{\alpha} \in \PSD_{d_n}$.
\end{theorem-nn}

Note that we are assuming that $\sigma^{[1]}_1$ and $\sigma^{[1]}_2$ are linearly independent, and similarly for the other sites, since otherwise the sum over the corresponding index would only need to take one value.

\begin{proof}
 We prove the statement by induction on $n$. 
In the bipartite case we have 
$$
\rho = \sum_{\alpha=1}^2 A^{[1]}_{\alpha}\otimes A_{\alpha}^{[2]},
$$
where $A^{[1]}_{\alpha}$ and $A^{[1]}_{\alpha}$ are Hermitian. 
This is a particular case of  \cref{thm:bipartite}, and thus it follows that
$$
\rho = \sum_{\alpha=1}^2 \sigma^{[1]}_{\alpha}\otimes \sigma_{\alpha}^{[2]},
$$
where $\sigma^{[1]}_{\alpha}$ and $\sigma^{[2]}_{\alpha}$ are positive semidefinite. 
 For the induction step we also need the following property: 
 if another state admits a representation with the same $A^{[1]}_{\alpha}$ but  different $A^{[2]}_{\alpha}$, 
then its separable decomposition can be chosen with the same $\sigma^{[1]}_{\alpha}$, i.e.\ only $\sigma^{[2]}_{\alpha}$ is modified. 
This follows directly from the proof of \cref{thm:bipartite}. 
 
Now, for the induction step, assume that the statement holds for $n$ sites. 
We first write the positive semidefinite matrix $\rho$ on  $n+1$ sites as  
\be
\rho = \sum_{\alpha_{n}=1}^2 B_{\alpha_{n}}^{[1\ldots n]} \otimes A^{[n+1]}_{\alpha_n} 
\label{eq:rhon+1}
\ee
where
$$
B_{\alpha_{n}}^{[1\ldots n]} = \sum_{\alpha_1,\ldots, \alpha_{n-1}=1}^2 
A^{[1]}_{\alpha_1} \otimes A^{[2]}_{\alpha_1,\alpha_2} \otimes \cdots \otimes A^{[n]}_{\alpha_{n-1},\alpha_n} .
$$
Note that $A^{[n+1]}_{1}$, $A^{[n+1]}_{2}$ are Hermitian, and so are  $ B_{1}^{[1\ldots n]}$ and $ B_2^{[1\ldots n]}$. Thus, 
applying \cref{cor:main} \ref{cor:3} to $\rho$ in Eq.\ \eqref{eq:rhon+1}, we find that it is separable and can be written as 
\be 
\label{eq:rhon+1sep}
\rho =\sum_{\alpha_n=1}^2 \sigma^{[1\ldots n]}_{\alpha_n}\otimes \sigma_{\alpha_n}^{[n+1]}
\ee
where $\sigma^{[1\ldots n]}_{\alpha_n} \in \PSD_{d_1\cdot d_2\cdots d_n}$ and 
$\sigma_{\alpha_n}^{[n+1]}\in \PSD_{d_{n+1}}$. 
Moreover, $\sigma^{[1\ldots n]}_{\alpha_n}$ is a real linear combination of $B_{1}^{[1\ldots n]}$ and $B_{2}^{[1\ldots n]}$, 
\be
&&\sigma^{[1\ldots n]}_{\alpha_n} =  
\sum_{j=1}^2 \lambda_{\alpha_n j} B_{j}^{[1\ldots n]} \nn\\
&&=
 \sum_{\alpha_1,\ldots, \alpha_{n-1}}^2 
 A^{[1]}_{\alpha_1}  \otimes \cdots \otimes  
\left( \sum_{j =1}^2 \lambda_{\alpha_n j} A^{[n]}_{\alpha_{n-1},j}\right)
 \label{eq:sigma1n}
\ee
Thus, both $\sigma_{1}^{[1\ldots, n]}$ and $\sigma_{2}^{[1\ldots, n]}$ are positive semidefinite, defined on $n$ sites, and of Hermitian operator Schmidt rank $2$, as Eq.\ \eqref{eq:sigma1n} shows. 
Note also that  $\sigma_{1}^{[1\ldots, n]}$ and  $\sigma_{2}^{[1\ldots, n]}$ only differ 
at the $n$-th site. 
By the induction hypothesis, both $\sigma_{\alpha_n}^{[1\ldots, n]}$ have separable  rank 2 representations, using the same positive semidefinite matrices at the first $n-1$ sites.  Explicitly, 
$$
\sigma_{\alpha_n}^{[1\ldots n]} = \sum_{\alpha_1,\ldots, \alpha_{n-1}=1}^2 
\sigma^{[1]}_{\alpha_1}\otimes 
\sigma^{[2]}_{\alpha_1,\alpha_2}\otimes \cdots \otimes \sigma^{[n]}_{\alpha_{n-1},\alpha_n},
$$
where  $\sigma^{[1]}_\alpha \in \PSD_{d_1}$, 
$\sigma^{[2]}_{\alpha,\beta} \in \PSD_{d_2} \ldots$
Inserting this expression in $\rho$ of Eq.\ \eqref{eq:rhon+1sep} we obtain a separable decomposition of $\rho$ with separable rank 2. 

Finally, note that if only the  $A^{[n+1]}_{\alpha_n}$ in the initial representation of $\rho$ are changed (Eq.\ \eqref{eq:rhon+1}), then the  separable decomposition  that we have just constructed will only differ at the last site.
\end{proof}


\subsection{Imposing Hermiticity in the multipartite case}
\label{ssec:withc1}

Here we analyse how to impose Hermiticity of the local matrices for multipartite states, and then prove a modified version of \cref{thm:mainmpo} in which the local matrices need not be Hermitian but linearly independent. 

As we have seen in \cref{lem:bipartiteHerm}, 
in the bipartite case we can impose Hermiticity of the local matrices without increasing the number of terms, that is, the operator Schmidt rank and its Hermitian counterpart coincide, $\hosr(\rho)=\osr(\rho)$. 
In the multipartite case we can only give the following bound, which we do not know  if it is tight.

\begin{proposition}[Hermiticity in the multipartite case]\label{pro:hermmulti}
Let $\rho$ be a Hermitian matrix in $ \mc{M}_{d_1} \otimes \cdots \otimes \mc{M}_{d_n}$, where $d_1,\ldots, d_n$ are arbitrary. 
Then 
$$ \osr(\rho)\leq \hosr(\rho)\leq 2^{n-1}\osr(\rho),$$  
except if $n=2$,  in which case $\hosr(\rho)=\osr(\rho)$.
\end{proposition}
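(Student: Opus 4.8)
The plan is to establish the lower bound trivially and then focus on the upper bound, treating the $n=2$ case separately via \cref{lem:bipartiteHerm}. The lower bound $\osr(\rho)\leq\hosr(\rho)$ is immediate: any Hermitian MPDO form (\cref{def:hermpdo}) is in particular an MPDO form (\cref{def:mpdo}), so the minimum over the smaller class of Hermitian decompositions cannot be smaller than the minimum over all decompositions. The case $n=2$ is exactly the content of \cref{lem:bipartiteHerm}, which already gives $\hosr(\rho)=\osr(\rho)$. So the real work is the upper bound $\hosr(\rho)\leq 2^{n-1}\osr(\rho)$ for $n\geq 3$.

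For the upper bound, I would start from an optimal (non-Hermitian) MPDO representation of $\rho$ with $D=\osr(\rho)$ bond dimension, writing $\rho=\sum_{\alpha_1,\ldots,\alpha_{n-1}} A^{[1]}_{\alpha_1}\otimes\cdots\otimes A^{[n]}_{\alpha_{n-1}}$ with generically non-Hermitian local tensors. The idea is to split each local matrix into its Hermitian and anti-Hermitian parts, as in Step~1 of the bipartite method: write $A^{[l]}=A^{[l],0}+\mathrm{i}A^{[l],1}$ where $A^{[l],0}=(A^{[l]}+(A^{[l]})^\dagger)/2$ and $A^{[l],1}=-\mathrm{i}(A^{[l]}-(A^{[l]})^\dagger)/2$ are both Hermitian. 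Expanding the tensor product of these sums across all $n$ sites produces a sum of terms, each a tensor product of Hermitian matrices, but each original bond index now carries an extra binary ``parity'' label (choosing the $0$ or $1$ part at each site). Since $\rho$ is Hermitian, the anti-Hermitian contributions must cancel, so only terms whose total number of anti-Hermitian factors is even survive when one symmetrizes; but the decomposition is still a valid Hermitian MPDO, and the bond dimension has been multiplied by the number of parity choices that can propagate through the chain.

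The key bookkeeping step is to count how the parity labels inflate the bond dimension. Each of the $n-1$ bonds connects two adjacent sites, and the parity choice at each internal and boundary tensor effectively doubles the relevant index; the factor $2^{n-1}$ should emerge as one parity degree of freedom per bond (equivalently, the Hermiticity constraint couples adjacent parities, leaving $n-1$ independent binary choices after fixing the global Hermiticity requirement). Concretely, I would absorb the parity label at site $l$ into the bond index $\alpha_l$, so each of the $n-1$ bond dimensions is multiplied by $2$, giving total bond dimension $2^{n-1}D$. The resulting expression is manifestly a Hermitian MPDO of bond dimension $2^{n-1}\osr(\rho)$, establishing $\hosr(\rho)\leq 2^{n-1}\osr(\rho)$.

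The main obstacle I anticipate is the careful combinatorial accounting of how the Hermitian/anti-Hermitian splitting interacts with the \emph{bond} structure rather than the site structure: naively splitting all $n$ local factors suggests a factor of $2^n$ or requires showing that Hermiticity of the global $\rho$ removes exactly one binary degree of freedom to yield $2^{n-1}$. I would handle this by grouping the parity labels along bonds and arguing that the overall Hermiticity condition on $\rho$ is automatically satisfied by summing conjugate pairs, so that the two boundary tensors' parities are determined by the internal ones, leaving precisely $n-1$ free doublings. The fact that the proposition states the bound may not be tight signals that this counting is the delicate part and that a cleaner argument might give a smaller constant, but $2^{n-1}$ is what the bond-by-bond doubling naturally produces.
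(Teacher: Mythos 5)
Your lower bound, your treatment of $n=2$ via \cref{lem:bipartiteHerm}, and your starting point---split each local tensor as $A=A^0+iA^1$ with $A^0,A^1$ Hermitian, expand, and use Hermiticity of $\rho$ to discard all terms with an odd number of anti-Hermitian factors---coincide exactly with the paper's proof. The gap is in the final counting step, which is the entire content of the upper bound, and as written it is inconsistent. The bond dimension $D$ in \cref{def:mpdo,def:hermpdo} is a \emph{per-bond} quantity: if you absorb one binary label into each bond index $\alpha_l$, then every bond index runs over $2D$ values and the resulting decomposition has bond dimension $2D$; doublings on distinct bonds do not multiply together to produce $2^{n-1}D$. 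The paper's factor $2^{n-1}$ arises in a different way: after discarding odd terms there remain exactly $2^{n-1}$ parity patterns $(k_1,\ldots,k_n)\in\{0,1\}^n$ with $\mathbf{k}=k_1+\cdots+k_n$ even, each yielding a decomposition $\pm\sum_{\alpha}A^{[1]k_1}_{\alpha_1}\otimes\cdots\otimes A^{[n]k_n}_{\alpha_{n-1}}$ with Hermitian tensors and bond dimension $D$, and these $2^{n-1}$ decompositions are summed as a direct sum, so that bond dimensions \emph{add}, giving $2^{n-1}D$.

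Moreover, the bond-absorption construction you gesture at is never carried out, and the missing details are precisely the nontrivial ones. The label that can consistently flow along bond $l$ is not the parity $k_l$ chosen at site $l$ but the cumulative parity $p_l=k_1\oplus\cdots\oplus k_l$ (otherwise site $n$ cannot know whether the total parity is even), and the surviving coefficient $i^{\mathbf{k}}=(-1)^{\mathbf{k}/2}$ is a sign depending nonlocally on all the $k_l$; the naive local factors $i^{k_l}A^{[l]k_l}$ are anti-Hermitian whenever $k_l=1$, so the doubled-bond decomposition is not manifestly a Hermitian MPDO. This is fixable by a gauge transformation diagonal in the parity label, e.g.\ multiplying each local tensor by $(-i)^{p_l}$ on its right bond and by $i^{p_{l-1}}$ on its left bond: the inserted phases telescope to $1$ along the chain, and every local tensor becomes $\pm A^{[l]k_l}$ with $k_l=p_{l-1}\oplus p_l$, hence Hermitian. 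If you carry this out, you actually prove the \emph{stronger} bound $\hosr(\rho)\leq 2\,\osr(\rho)$ for every $n$, which implies the proposition and shows that its $2^{n-1}$ factor can be improved---addressing the tightness question the paper leaves open. So your instinct points somewhere better than the paper's bound, but the proposal as written proves neither the stated bound (the counting does not cohere) nor the stronger one (Hermiticity of the doubled-bond tensors is never established).
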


\begin{proof}
The statement for $n=2$ corresponds to \cref{lem:bipartiteHerm}. 
For general $n$, 
we consider the MPDO form of $\rho$ [Eq.\ \eqref{eq:mpdo}], and express each matrix as a sum of its Hermitian and anti-Hermitian part, namely as 
$$
A_{\alpha}^{[l]} = \sum_{k=0}^1 i^k A_{\alpha}^{[l] k} 
$$
for $l=1,n$, and 
$$
A_{\alpha,\beta}^{[l]} = \sum_{k=0}^1 i^k A_{\alpha,\beta}^{[l] k} 
$$
for $1<l<n$. Substituting in the MPDO form of $\rho$ we obtain
$$
\rho=\sum_{k_1,\ldots, k_n=0}^1 i^{\mathbf{k}}
\sum_{\alpha_1,\ldots, \alpha_{n-1}=1}^D 
A^{[1] k_1}_{\alpha_1} \otimes 
A^{[2] k_2}_{\alpha_1,\alpha_2} \otimes 
\cdots \otimes
A^{[n] k_n}_{\alpha_{n-1}} ,
$$
where $\mathbf{k}= k_1+\cdots + k_n$. Since $\rho$ is Hermitian, all terms where $\mathbf{k}$ is odd must vanish. 
Thus the above expression contains $2^{n-1}$ terms, all of which are are Hermitian. Thus $\hosr(\rho)\leq 2^{n-1} \osr(\rho)$.
\end{proof}

With the help of an additional condition we can prove an analogous result to the bipartite case.\footnote{We thank an anonymous referee for suggesting this argument.} The following result is presented for finitely many matrices $\rho_i$, because this is needed in the induction step of the proof. Afterwards, we will prove a corollary for a single $\rho$.


\begin{proposition}
\label{prop:multi} 
Let $$
 \rho_1,\ldots, \rho_r\in \mc{M}_{d_1}\otimes \mc{M}_{d_2} \otimes \cdots \otimes\mc{M}_{d_n}
$$ be Hermitian matrices
such that
\be
\rho_i = \sum_{\alpha_1,\ldots, \alpha_{n-1}=1}^k 
P^{[1]}_{\alpha_1} \otimes P^{[2]}_{\alpha_1,\alpha_2}\otimes 
 \cdots 
\otimes P_{\alpha_{n-2},\alpha_{n-1}}^{[n-1]} \otimes P^{[n]}_{\alpha_{n-1},i}
\nn
\ee 
where at each site the local matrices of $\rho_i$ are linearly independent. 
Then $\rho_1,\ldots, \rho_r$  can be written as 
$$
\rho_i = 
\sum_{\alpha_1,\ldots, \alpha_{n-1}=1}^k 
A^{[1]}_{\alpha_1} \otimes A^{[2]}_{\alpha_1,\alpha_2}\otimes 
\cdots\otimes A_{\alpha_{n-2},\alpha_{n-1}}^{[n-1]}\otimes A^{[n]}_{\alpha_{n-1},i}
$$
where  all local matrices are Hermitian.
\end{proposition}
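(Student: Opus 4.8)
The plan is to reduce everything to the bipartite Hermiticity statement \cref{lem:bipartiteHerm} by an induction on the number of sites $n$, peeling off one site at a time. The one genuinely new ingredient is a \emph{shared} version of \cref{lem:bipartiteHerm}: given Hermitian matrices $\rho_i=\sum_{\alpha=1}^k P_\alpha\otimes Q_{\alpha,i}\in\mc{M}_{d_1}\otimes\mc{M}_{d_2}$ in which the left factors $P_\alpha$ are the \emph{same} for all $i$ (with $\{P_\alpha\}_\alpha$, and each $\{Q_{\alpha,i}\}_\alpha$, linearly independent), one can rewrite $\rho_i=\sum_{\alpha=1}^k A_\alpha\otimes B_{\alpha,i}$ with all $A_\alpha\in\Her_{d_1}$ and $B_{\alpha,i}\in\Her_{d_2}$ and the $A_\alpha$ still shared across $i$. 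This shared statement is exactly the $n=2$ base case of the induction, and it is the part I expect to require the most care.

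To prove the shared bipartite statement I would stack the $\rho_i$ into a single matrix $\tilde\rho:=\sum_{i=1}^r\rho_i\otimes|i\rangle\langle i|=\sum_{\alpha=1}^k P_\alpha\otimes\tilde Q_\alpha$ on $\mc{M}_{d_1}\otimes\mc{M}_{d_2 r}$, where $\tilde Q_\alpha:=\sum_i Q_{\alpha,i}\otimes|i\rangle\langle i|$. Then $\tilde\rho$ is Hermitian, the $\{P_\alpha\}$ are linearly independent by hypothesis, and the $\{\tilde Q_\alpha\}$ are linearly independent because each $\{Q_{\alpha,i}\}_\alpha$ is. Applying \cref{lem:bipartiteHerm} yields a minimal Hermitian decomposition $\tilde\rho=\sum_{\alpha=1}^k A_\alpha\otimes\hat B_\alpha$ with $A_\alpha\in\Her_{d_1}$ and $\hat B_\alpha\in\Her_{d_2 r}$, both families linearly independent. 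The factors $\hat B_\alpha$ need not respect the block structure of the stacking site, so I would apply the pinching $\mc{D}(X):=\sum_i(I_{d_2}\otimes|i\rangle\langle i|)\,X\,(I_{d_2}\otimes|i\rangle\langle i|)$, which fixes $\tilde\rho$ (already block diagonal) and preserves Hermiticity. This gives $\tilde\rho=\sum_\alpha A_\alpha\otimes\mc{D}(\hat B_\alpha)$ with each $\mc{D}(\hat B_\alpha)=\sum_i B_{\alpha,i}\otimes|i\rangle\langle i|$ block diagonal and Hermitian; comparing the $i$-th diagonal blocks (the map $(\sigma_i)_i\mapsto\sum_i\sigma_i\otimes|i\rangle\langle i|$ being injective) recovers $\rho_i=\sum_\alpha A_\alpha\otimes B_{\alpha,i}$ with the desired Hermiticity and shared $A_\alpha$. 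The pinching step is the conceptual heart of the argument.

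For the induction step I would assume the proposition for $n$ sites and prove it for $n+1$. Writing $\rho_i=\sum_{\alpha_n=1}^k C_{\alpha_n}\otimes P^{[n+1]}_{\alpha_n,i}$ with $C_{\alpha_n}:=\sum_{\alpha_1,\ldots,\alpha_{n-1}}P^{[1]}_{\alpha_1}\otimes\cdots\otimes P^{[n]}_{\alpha_{n-1},\alpha_n}$ groups the first $n$ sites into a single factor and exhibits a shared bipartite decomposition (left factors shared, right factors carrying $i$). The shared bipartite statement then makes all factors Hermitian, $\rho_i=\sum_{\alpha_n}\tilde C_{\alpha_n}\otimes A^{[n+1]}_{\alpha_n,i}$, where each $\tilde C_{\alpha_n}$ lies in the span of the original $C_\beta$ (in any decomposition with linearly independent right factors the left factors lie in the fixed left-support space). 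Absorbing this change of basis into the last factor, $\hat P^{[n]}_{\alpha_{n-1},\alpha_n}:=\sum_\beta\lambda_{\alpha_n\beta}P^{[n]}_{\alpha_{n-1},\beta}$, shows that the Hermitian matrices $\{\tilde C_{\alpha_n}\}_{\alpha_n=1}^k$ on $n$ sites again have the form required by the proposition (with the role of $r$ played by $k$). The induction hypothesis then rewrites each $\tilde C_{\alpha_n}$ with all-Hermitian factors sharing the first $n-1$ sites, and substituting back gives the all-Hermitian $(n+1)$-site decomposition, shared at sites $1,\ldots,n$ and carrying $i$ only at site $n+1$.

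The main obstacle is bookkeeping the linear-independence hypotheses so that both \cref{lem:bipartiteHerm} and the induction hypothesis genuinely apply at every step. Concretely, I must ensure that the grouped left factors $\{C_{\alpha_n}\}_{\alpha_n}$ are linearly independent (so the shared bipartite statement is usable) and that, after Hermitization, the resulting $\{\tilde C_{\alpha_n}\}$ still satisfy the per-site independence needed to invoke the induction hypothesis. The latter uses that \cref{lem:bipartiteHerm} returns a \emph{minimal} decomposition with linearly independent factors, so the matrix $(\lambda_{\alpha_n\beta})$ relating $\tilde C_{\alpha_n}$ to $C_\beta$ is invertible; the $\lambda$'s need not be real, but this is harmless since the induction hypothesis allows arbitrary complex local factors. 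The remainder is careful tracking of which factors are shared across $i$ and which carry the index.
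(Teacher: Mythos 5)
Your proposal is correct, and its overall skeleton is the same as the paper's: induction on the number of sites, a ``shared'' bipartite Hermitization statement as the base case, and an induction step that groups the first $n$ sites, Hermitizes the resulting bipartite decomposition, and then applies the induction hypothesis to the grouped left factors after an invertible change of basis. Where you genuinely diverge is in the proof of the shared base case, which you rightly identify as the heart of the matter. The paper dispatches it in one line: since the left factors $P^{[1]}_1,\ldots,P^{[1]}_k$ are common to all $\rho_i$, one chooses a \emph{Hermitian basis} $A_1,\ldots,A_k$ of ${\rm span}\{P^{[1]}_1,\ldots,P^{[1]}_k\}$ (this span is $\dagger$-invariant because the $\rho_i$ are Hermitian and the right factors linearly independent), and then Hermiticity of each $\rho_i$ together with the $\mathbb{C}$-linear independence of the $A_\alpha$ forces all the corresponding right factors to be Hermitian, exactly as in the proof of \cref{lem:bipartiteHerm}. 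You instead stack the states into $\tilde\rho=\sum_i \rho_i\otimes|i\rangle\langle i|$, invoke \cref{lem:bipartiteHerm} as a black box on $\tilde\rho$, and pinch the right factors to restore block diagonality; this is valid (the pinching fixes $\tilde\rho$, preserves Hermiticity, and block extraction is injective), and it buys you a proof that never needs to observe that the common span is closed under conjugate transposition, at the cost of an auxiliary enlarged system. A side benefit of your write-up is that the linear-independence bookkeeping in the induction step (linear independence of the grouped factors, invertibility of the change-of-basis matrix $(\lambda_{\alpha_n\beta})$, and the harmlessness of complex $\lambda$'s) is spelled out, whereas the paper only asserts that ``it is easy to see'' that independence is maintained throughout the construction.
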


\begin{proof}The proof is by induction on $n$.
The case $n=2$ is precisely \cref{lem:bipartiteHerm}, where the fact that the Hermitian matrices at site one can be chosen the same for all $\rho_i$ follows from choosing  a Hermitian basis of ${\rm span}\{P_1^{[1]},\ldots, P_k^{[1]}\}$.

For a general $n$, we proceed as in the proof of \cref{thm:mainmpo}. It is easy to see that linear independence of the local matrices at each site is maintained throughout the construction. 
\end{proof}

\begin{corollary}
Let $
0\leqslant \rho\in \mc{M}_{d_1}\otimes \mc{M}_{d_2} \otimes \cdots \otimes\mc{M}_{d_n}
$ be positive semidefinite with 
\be
\rho = \sum_{\alpha_1,\ldots, \alpha_{n-1}=1}^2 
P^{[1]}_{\alpha_1} \otimes P^{[2]}_{\alpha_1,\alpha_2}\otimes 
 \cdots 
\otimes P_{\alpha_{n-2},\alpha_{n-1}}^{[n-1]} \otimes P^{[n]}_{\alpha_{n-1}}
\nn
\ee 
where at each site the local matrices are linearly independent. Then $\rho$ is separable with $\seprank(\rho)=2.$
\end{corollary}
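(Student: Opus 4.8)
The plan is to derive this final corollary as an immediate consequence of the two results that directly precede it in this subsection, namely \cref{prop:multi} (which upgrades a linearly-independent MPDO decomposition to a Hermitian one without increasing the number of terms) and \cref{thm:mainmpo} (which establishes separability with $\seprank=2$ for Hermitian-MPDO states of bond dimension two). The key observation is that the corollary is exactly the single-matrix ($r=1$) specialization of the hypothesis of \cref{prop:multi}, with bond dimension $k=2$, but with the additional assumption that $\rho\geqslant 0$.

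First I would apply \cref{prop:multi} in the case $r=1$ and $k=2$. The hypotheses match: $\rho$ is Hermitian (being positive semidefinite), it is written as a bond-dimension-two MPDO with local matrices $P^{[l]}_{\bullet}$, and we are given that at each site these local matrices are linearly independent. The conclusion of \cref{prop:multi} then yields a bond-dimension-two decomposition
$$
\rho = \sum_{\alpha_1,\ldots,\alpha_{n-1}=1}^2 A^{[1]}_{\alpha_1}\otimes A^{[2]}_{\alpha_1,\alpha_2}\otimes \cdots \otimes A^{[n]}_{\alpha_{n-1}}
$$
in which every local matrix $A^{[l]}_{\bullet}$ is Hermitian. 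In other words, passing through \cref{prop:multi} converts the bare linear-independence hypothesis into the Hermiticity hypothesis required by \cref{thm:mainmpo}, and it does so while keeping the bond dimension equal to two.

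Second, I would invoke \cref{thm:mainmpo} directly on this Hermitian-MPDO representation of $\rho$. Since $\rho\geqslant 0$ and all local matrices are now Hermitian with all summation indices ranging over $\{1,2\}$, \cref{thm:mainmpo} applies verbatim and gives that $\rho$ is separable with a separable decomposition of bond dimension two, i.e.\ $\seprank(\rho)\leq 2$. To conclude $\seprank(\rho)=2$ rather than merely $\leq 2$, I would rule out $\seprank(\rho)=1$: by \cref{thm:basic}~(i), $\seprank(\rho)=1$ would force $\osr(\rho)=1$, making $\rho$ a product state; but the assumption that the local matrices are linearly independent at each site (in particular that $P^{[1]}_1,P^{[1]}_2$ are genuinely two distinct directions) precludes $\osr(\rho)=1$. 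Hence $\seprank(\rho)=2$.

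In truth there is no real obstacle here, which is why the statement is phrased as a corollary rather than a proposition: the work has already been done in \cref{prop:multi} and \cref{thm:mainmpo}. The only point that requires a moment's care is the bookkeeping on the \emph{lower} bound for $\seprank$ — confirming that linear independence at the sites genuinely rules out the product-state case $\seprank=1$ — whereas the \emph{upper} bound $\seprank(\rho)\leq 2$ and separability itself are handed to us directly by \cref{thm:mainmpo}. One should also double-check that the linear-independence hypothesis of the corollary is precisely what \cref{prop:multi} needs as input (it is, by construction), so that the two cited results chain together without any gap.
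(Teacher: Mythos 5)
Your proposal follows exactly the paper's own argument: the paper proves this corollary in one line (``Clear from \cref{prop:multi} and \cref{thm:mainmpo}''), i.e.\ precisely the chain you describe of first Hermitianizing via \cref{prop:multi} with $r=1$, $k=2$, and then invoking \cref{thm:mainmpo}. Your additional check that linear independence of the local matrices rules out $\seprank(\rho)=1$ is a sensible piece of bookkeeping that the paper leaves implicit, but it does not change the route.
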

\begin{proof}Clear from \cref{prop:multi} and \cref{thm:mainmpo}.
\end{proof}

\subsection{Context and implications of \cref{thm:mainmpo}}
\label{ssec:implranks}

As mentioned in \cref{ssec:context}, \cref{thm:mainmpo} can be seen as a generalization of \cref{pro:rankclass} to the case that $\rho$ is a general quantum state (i.e.\ not diagonal in the computational basis), and it is multipartite. In particular, \cref{thm:mainmpo} implies the following relations among ranks for the multipartite case. 

\begin{corollary}[of \cref{thm:mainmpo}] \label{cor:ranks} 
Let $\rho$ be a multipartite positive semidefinite matrix, 
$$
0\leqslant \rho\in \mc{M}_{d_1}\otimes \mc{M}_{d_2} \otimes \cdots \otimes\mc{M}_{d_n} , 
$$ 
where $d_1,d_2,\ldots,d_n$ are arbitrary. 
Then the following are equivalent: 
\begin{itemize}
\item[(i)]$\hosr(\rho) = 2.$
\item[(ii)]  $\seprank(\rho)=2$. 
\end{itemize}
Moreover, they imply the following: 
\begin{itemize}
\item[(iii)] $ \purirank(\rho) = 2$ and $\rho$ is separable.
\end{itemize}
\end{corollary}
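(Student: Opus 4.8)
The plan is to deduce the corollary entirely from \cref{thm:mainmpo} and \cref{thm:basic}, together with one elementary linking observation: since every positive semidefinite matrix is Hermitian, any separable decomposition is in particular a Hermitian MPDO with the same number of terms. Hence $\hosr(\rho)\le\seprank(\rho)$ for every $\rho$. This inequality is not listed in \cref{thm:basic}, but it is immediate from $\PSD_d\subseteq\Her_d$, and it is the glue that lets the three implications reduce to short sandwiching arguments.

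First I would prove (i)$\Rightarrow$(ii). Assuming $\hosr(\rho)=2$, \cref{thm:mainmpo} gives that $\rho$ is separable and that $\seprank(\rho)\le 2$. Combining with the linking observation yields $2=\hosr(\rho)\le\seprank(\rho)\le 2$, so $\seprank(\rho)=2$. For (ii)$\Rightarrow$(i), assume $\seprank(\rho)=2$. The linking observation gives $\hosr(\rho)\le 2$, and it remains only to exclude $\hosr(\rho)=1$. Here I would use $\osr\le\hosr$: if $\hosr(\rho)=1$ then $\osr(\rho)=1$, whence \cref{thm:basic}~(i) forces $\seprank(\rho)=1$, contradicting the hypothesis. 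Therefore $\hosr(\rho)=2$.

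Finally, for (iii), by the equivalence just established both (i) and (ii) hold, so $\rho$ is separable (via \cref{thm:mainmpo}) with $\seprank(\rho)=2$. Then \cref{thm:basic}~(iii) gives $\purirank(\rho)\le\seprank(\rho)=2$, and the value $\purirank(\rho)=1$ is excluded because \cref{thm:basic}~(i) would again collapse $\seprank(\rho)$ to $1$. Hence $\purirank(\rho)=2$. I do not expect a genuine obstacle in this corollary: the real content sits in \cref{thm:mainmpo}, and the only point requiring care is the systematic elimination of the rank-one degenerate cases, for which the equivalence $\osr=1\iff\purirank=1\iff\seprank=1$ of \cref{thm:basic}~(i) is exactly the right tool.
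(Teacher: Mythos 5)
Your proposal is correct and follows essentially the same route as the paper: both deduce everything from \cref{thm:mainmpo} and \cref{thm:basic}, using the inequality $\hosr(\rho)\le\seprank(\rho)$ together with \cref{thm:basic}~(i) to rule out the degenerate rank-one cases. If anything, you are slightly more careful than the paper, which attributes $\hosr\le\seprank$ to \cref{thm:basic} (where only $\osr\le\seprank$ is stated), whereas you correctly flag it as a separate, elementary observation following from $\PSD_d\subseteq\Her_d$.
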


\begin{proof}
(i) implies (ii):  is shown in \cref{thm:mainmpo}.

(ii) implies (i): From  \cref{thm:basic} we have that  $\hosr(\rho)\leq\seprank(\rho)$, and $\hosr(\rho)=1$ implies $ \seprank(\rho) = 1$, which contradicts the assumption. 

(ii) implies (iii): From \cref{thm:basic} (iii) we have that $\purirank(\rho)\leq\seprank(\rho)$, and from  \cref{thm:basic} (i)
$\purirank(\rho)=1$ implies that $\seprank(\rho)=1$, which contradicts the assumption. 
\end{proof}

This leaves the following situation for multipartite positive semidefinite matrices of small operator Schmidt rank (summarised in \cref{tab:overview}):
\begin{itemize}
\item If $\osr(\rho)=1$, then $\rho$ is a product state, and  $\hosr(\rho)=\seprank(\rho) = \purirank(\rho)=1$ by \cref{thm:basic} (i). 

\item If $\osr(\rho)=2$ and $\hosr(\rho)=2$, then  $\rho$  is a separable state, and $\osr(\rho)=\seprank(\rho) = \purirank(\rho)=2$ by \cref{cor:ranks}. 

\item If $\osr(\rho)=2$ and $\hosr(\rho)>2$, we cannot conclude anything about the state. (Note that from \cref{pro:hermmulti} we only know that, if $\osr(\rho)=2$, then $2\leq \hosr(\rho)\leq 2^n$, where $n$ is the number of sites, i.e.\ $\rho\in \mc{M}_{d_1}\otimes \cdots \otimes \mc{M}_{d_n}$).

\item Ref.\ \cite{De13c} provides a family of states 
$$
\{\rho_n \in \mc{M}_{2} \otimes \cdots \otimes \mc{M}_{2} \: (n \textrm{ times}) \}_{n\geq 1}
$$ 
which are diagonal in the computational basis (thus, separable) for which $\osr(\rho_n)=3$ for all $n$, and $\purirank(\rho_n)$ diverges with $n$. By \cref{thm:basic} (iii), this implies that $\seprank(\rho_n)$ also diverges with $n$.
\end{itemize}

In words, the case $\hosr(\rho)=2$ behaves very similarly to the trivial case $\hosr(\rho)=1$, 
as one can determine the  kind of state $\rho$ is, and one  can upper bound the amount correlations. 
On the other hand, the separation between $\osr(\rho)$ and $\purirank(\rho)$ appears for a very small value of $\osr$, namely $\osr(\rho)=3$. 

\begin{widetext}
\begin{center}
\begin{table}[htb]
\begin{tabular}{c|c|l}
$\osr(\rho)$ & $\hosr(\rho)$&  Multipartite state $\rho$ \\ \hline 
1 &1 &Product state  \\ 
2 & 2& Separable state, and separable rank and purification rank are both 2\\
2 &  $> 2$&?\\
3 & $\geq 3$  &If separable, purification rank 
can be unbounded, \\
& &and thus separable rank can be unbounded too
\end{tabular}
\caption{Overview of the properties of multipartite positive semidefinite matrices $\rho$ with small operator Schmidt rank ($\osr$) and hermitian operator Schmidt rank ($\hosr$).}
\label{tab:overview}
\end{table}
\end{center}
\end{widetext}

\section{Conclusions \& Outlook}
\label{sec:concl}

We have presented two main results.
First, we have shown that any bipartite positive semidefinite matrix of operator Schmidt rank two is separable, and admits a separable decomposition with  two  positive semidefinite matrices per site (\cref{thm:bipartite}). 
We have also provided a step-by-step method to obtain this separable decomposition (\cref{ssec:method}), 
and have drawn some consequences of this result concerning the relation among the ranks (\cref{cor:ranks1}, see also the summary of \cref{tab:overviewbipartite}), 
for quantum channels  (\cref{cor:channels}), and for the amount of correlations of the state (\cref{cor:corr}). 
While this result was already known \cite{Ca14b}, we have provided a new proof thereof. In particular, we have leveraged a relation between minimal operator systems and separable states (\cref{thm:main}) and its implications (\cref{cor:main}).

Second, we have shown that any multipartite positive semidefinite matrix of Hermitian operator Schmidt rank two is separable, and it has separable rank two (\cref{thm:mainmpo}).  
We have drawn consequences of this result for the ranks (\cref{cor:ranks}, see also the summary of \cref{tab:overview}).

This work leaves a number of interesting open questions. 
A first question concerns the generalisation of these results to the case of operator Schmidt rank three.  
Bipartite states $\rho\in \mathcal{M}_{d}\otimes \mathcal{M}_{d'}$  with operator Schmidt rank three must be separable  if $d=2$ or $d'=2$ \cite{Ca14b}, but this is no longer true in the case that $d>2$ and $d'>2$. 
Further results for these states are shown in \cite[Corollary III.5]{Ch18}.  
It would be interesting to analyse the  consequences of these results for the multipartite case, and compare them to   \cref{tab:overview}.

A second  question is to clarify the relation between the operator Schmidt rank and the hermitian operator Schmidt rank in the multipartite case. In particular, it would be interesting to find out whether \cref{pro:hermmulti} is tight or can be improved, and  
 to find direct relations between the $\hosr$ and the separable rank and/or the purification rank. 

On a broader perspective, it would be worth studying which of these results extends to the translationally invariant (t.i.) case. 
Ref.\ \cite{De19} presents the t.i.\ versions of the MPDO form, the separable form, and the local purification form. 
For bipartite states which are diagonal in the computational basis, 
the ranks associated to each of these decompositions 
correspond to well-studied ranks of nonnegative matrices (such as the cp rank and the cpsd rank), 
as mentioned in \cref{ssec:context}. Exploiting these relations may allow us to generalize results about ranks in that case, too, and gain further insight into the decompositions of t.i. quantum states. 

More generally, this paper illustrates how fruitful the connection between quantum states and (free) spectrahedra can be. The latter provide a novel set of techniques that can shed further light into the decompositions of quantum states. This is being explored in \cite{Ne19}.

\emph{Acknowledgements}. 
We thank an anonymous referee for suggesting  \cref{prop:multi} and bringing Refs.\ \cite{Ca17b,Gi18b} to our attention.
We thank A. M\"uller--Hermes, M.\ Studi\'nski and N.\ Johnston for bringing  Refs.\ \cite{Ca14b,Ca15b} to our attention. T.\ D.\ and T.\ N.\ are supported by the Austrian Science Fund FWF through project P 29496-N35.


\newcommand{\etalchar}[1]{$^{#1}$}

\end{document}